\newtheorem{theorem}{Theorem}
\def\BibTeX{{\rm B\kern-.05em{\sc i\kern-.025em b}\kern-.08em
    T\kern-.1667em\lower.7ex\hbox{E}\kern-.125emX}}
\newcommand{\MyComment}[1]{\State{\textcolor{gray}{// #1}}}
\newcommand{\ours}{Swift\xspace}
\newcommand{\sync}{Swift-sync\xspace}
\newcommand{\route}{Swift-route\xspace}
\newcommand{\forward}{Swift-forward\xspace}
\begin{document}

\title{Data On the Go: Seamless Data Routing for Intermittently-Powered Battery-Free Sensing}

\author{Gaosheng Liu,~\IEEEmembership{Member, IEEE,} 
        Lin Wang,~\IEEEmembership{Senior Member, IEEE}
\thanks{This work has been partially funded by the Dutch Research Council (NWO) grant OCENW.XS22.1.135. Gaosheng Liu is funded by the China Scholarship Council (CSC) fellowship. (Corresponding author: Lin Wang)}
\thanks{Gaosheng Liu is with the Department of Computer Science, Vrije Universiteit Amsterdam (email: g.s.liu@vu.nl).}
\thanks{Lin Wang is with the Department of Computer Science, Paderborn University and the Department of Computer Science, Vrije Universiteit Amsterdam (email: lin.wang@uni-paderborn.de).}
}

\markboth{IEEE Transactions on Mobile Computing,~Vol.~XX, No.~X, XXXX~XXXX}
{Gaosheng Liu and Lin Wang: Data On the Go}

\maketitle

\begin{abstract}
The rising demand for sustainable IoT has promoted the adoption of battery-free devices intermittently powered by ambient energy for sensing. However, the intermittency poses significant challenges in sensing data collection. Despite recent efforts to enable one-to-one communication, routing data across multiple intermittently-powered battery-free devices, a crucial requirement for a sensing system, remains a formidable challenge.

This paper fills this gap by introducing \ours, which enables seamless data routing in intermittently-powered battery-free sensing systems. \ours overcomes the challenges posed by device intermittency and heterogeneous energy conditions through three major innovative designs. First, \ours incorporates a reliable node synchronization protocol backed by number theory, ensuring successful synchronization regardless of energy conditions. Second, \ours adopts a low-latency message forwarding protocol, allowing continuous message forwarding without repeated synchronization. Finally, \ours features a simple yet effective mechanism for routing path construction, enabling nodes to obtain the optimal path to the sink node with minimum hops. We implement \ours and perform large-scale experiments representing diverse real-world scenarios. The results demonstrate that \ours achieves an order of magnitude reduction in end-to-end message delivery time compared with the state-of-the-art approaches for intermittently-powered battery-free sensing systems.
\end{abstract}

\begin{IEEEkeywords}
Battery-free sensing, intermittently-powered devices, network routing, sustainable IoT
\end{IEEEkeywords}

\section{Introduction}
\label{sec:introduction}

\IEEEPARstart{B}{attery}-free (BF) devices have been increasingly adopted for sensing due to their maintenance-free and environment-friendly traits~\cite{2017-sensys-futureic,2017-snapl-survey,2022-tecs-camaroptera, 2018-mobicom-eyetracker,2021-weee-cic,2020-ipsn-cis,2020-sensys-bfsensing,2023-ipsn-riotee}. These devices harvest energy from ambient sources like light, vibrations, and radio waves~\cite{2017-imwut-bfphone,2020-imwut-bfgame,2018-asplos-energystore}, and buffer it in capacitors to function. However, the ambient energy is usually limited and cannot sustain continuous device operation. Consequently, BF devices alternate between short periods of activity (e.g., a few milliseconds) and much longer periods of inactivity (e.g., 100s of milliseconds or seconds) for recharging, following a charging cycle determined by ambient energy conditions. To overcome this limitation, recent efforts have focused on achieving continuous computing on a single BF device~\cite{2016-osdi-ratchet,2016-oopsla-chain, 2018-sensys-ink,2020-sensys-state,2020-pacmpl-foundation,2020-pldi-scheduling,2020-tecs-checkpoint,2020-asplos-tics,2020-imwut-bfree,2022-pldi-wario} and enabling one-to-one communication~\cite{2022-mobisys-bluetooth,2021-nsdi-find,2022-nsdi-bonito}. Despite these advances, a significant challenge remains unsolved: the issue of \emph{data routing}, which involves transferring data across a network of intermittently-powered BF devices, a crucial function for sensing data collection.

Efficient data routing faces significant challenges imposed by the \emph{intermittency} of BF devices. In stark contrast to conventional duty cycling, where device states (e.g., active and sleep) are intentionally controlled~\cite{2019-sigcomm-nd,2016-infocom-panda,2016-rtss-eh}, intermittency introduces \emph{unpredictable} impacts on device operation. This uncertainty makes even simple tasks like maintaining reachability, which are straightforward for continuously-powered devices, much more complex. Synchronizing the working periods of two BF devices to ensure they are powered on simultaneously becomes a daunting task without knowledge of each other's charging cycle. Since BF devices spend most of their charging cycles turned off, the likelihood of both devices being powered on simultaneously is slim. Moreover, the heterogeneity of energy availabilities in different locations leads to diverse charging cycles for BF devices, requiring repeated synchronization attempts to maintain continuous reachability. Unfortunately, such synchronization efforts incur significant time overhead, proving detrimental for data routing, which relies on seamless, continuous data flow through the network, hop-by-hop.

Despite the extensive literature on routing~\cite{2012-csur-wsn}, current solutions suffer from inefficiencies due to their inability to handle device intermittency, resulting in significant message delivery delays caused by frequent node synchronization. Moreover, BF devices are energy-constrained and lack powerful geo-location components commonly found in continuously-powered devices, which are instrumental in efficient routing~\cite{2012-csur-wsn}. While energy-harvesting wireless sensor networks are closely related~\cite{2018-tsn-ehwsn,2021-tgcn-renew}, their continuous operation based on active power management and duty cycling~\cite{2014-mobihoc-eh,2016-infocom-panda,2018-infocom-ehwsn,2019-sigcomm-nd} sets them apart from sensing systems built with BF devices that operate intermittently on a millisecond timescale. These factors make existing routing approaches either inapplicable or inefficient, emphasizing the need for novel solutions.

In this paper, we propose \ours
, an efficient routing scheme that enables seamless data routing for sensing systems built with intermittently-powered BF devices. \ours incorporates three major innovations to address the challenges related to node synchronization, message forwarding, and route construction. First, \ours tackles the challenge of synchronizing BF devices under diverse energy conditions with \sync, a reliable synchronization protocol inspired by number theory principles. Specifically, we transform the node synchronization problem into an instance of the linear congruential generator (LCG) problem. \sync adjusts the working period of the sending BF device following the LCG and guarantees reliability by ensuring the full-cycle property of the LCG. With \sync, two BF devices without shared knowledge can reliably discover each other within a fixed time frame.

Second, \ours incorporates a low-latency message forwarding mechanism called \forward, specifically designed to facilitate efficient message passing for BF devices along a given path. The main challenge lies in the repeated synchronization between neighboring nodes at each hop, which leads to prolonged message delivery latency. We observe that once two BF devices are synchronized with \sync, they can maintain their synchronized state without the need for further expensive synchronizations. Building upon this finding, \forward optimizes message forwarding by caching the charging cycle of the next hop at each node upon successful synchronization and tracking the offset between the node's working time and that of its next hop. With \forward, each node can synchronize with its next hop and return to its original state, all within a single charging cycle time. The synchronization overhead is significantly reduced, thus greatly improving the efficiency of message forwarding.

Finally, \ours accomplishes route construction with a protocol called \route, which relies on a flooding mechanism initiated by the sink node at its core. BF devices announce their distance (i.e., hop count) to the sink node to their neighbors and update their next hop if they receive an announcement with a smaller hop count. To achieve an organized hop count propagation, \route employs a carefully crafted timing strategy that creates a layered structure based on each node's proximity to the sink node. With that, \route ensures the optimal construction of routes for each node towards the sink in terms of hop count and is easy to implement.

Overall, we make the following contributions in this paper. After identifying the key challenges for seamless routing in BF sensing systems (\S\ref{sec:motivation}), we present \ours where we 
\begin{itemize}
    \item propose a reliable node synchronization protocol \sync, guaranteeing successes within a fixed amount of time, regardless of the energy conditions (\S\ref{sec:sync}).
    \item introduce a low-latency message forwarding mechanism tailored for intermittently-powered BF devices, significantly reducing the synchronization overhead (\S\ref{sec:forward}). 
    \item present a route construction protocol \route for BF sensing systems, allowing each node to obtain a path to the sink with the least hop count (\S\ref{sec:route}). 
    \item implement \ours in OMNeT++ and perform large-scale experiments to validate its performance under various real-world scenarios (\S\ref{sec:eval}). We have also tested part of \ours on a hardware testbed. Results show that \ours achieves an order of magnitude gains over the state-of-the-art approaches in reducing the message delivery time.
\end{itemize}
\S\ref{sec:relatedwork} discusses related works and \S\ref{sec:conclusions} concludes the paper.

\section{Background and Motivation}
\label{sec:motivation}
In this section, we introduce the background on BF sensing, identify the challenges brought by BF devices in sensing systems, and motivate our work.

\subsection{Intermittently-Powered BF Sensing}
\label{sec:motivation:sensing}
Advancements in energy-harvesting technologies have significantly spurred the adoption of BF devices for the development of sensing systems, particularly in challenging environments~\cite{2017-snapl-survey,2019-asplos-genesis,2021-weee-cic,2020-ipsn-cis,2020-sensys-bfsensing}. These BF devices are equipped with energy harvesters capable of capturing ambient energy such as energy from the sun, vibrations, wind, and radio waves, which is then buffered in capacitors to power the device~\cite{2018-asplos-energystore}. Since they operate passively without batteries, such sensing systems, once deployed, can last long without maintenance. Also, battery-freeness makes them more environment-friendly~\cite{2017-snapl-survey}.

BF devices work \emph{intermittently}---the device suffers power failures and alternates between on and off states constantly. This is because the harvested energy buffered in the tiny capacitor can only support the running of the device---a low-power microcontroller (e.g., TI MSP430-series)---for a very short period (e.g., a few milliseconds or less)~\cite{2022-tecs-energyconsumption}. On the other hand, charging the capacitor takes significantly longer (e.g., 100s of milliseconds or even seconds) due to the scarcity of ambient energy~\cite{2022-nsdi-bonito}. A BF device is powered on when the capacitor is charged to a certain level and suffers a power failure when the charging level of the capacitor drops below a threshold. We call the time when the device is on the \emph{working period} and the \emph{charging period} otherwise. A \emph{charging cycle} spans one charging period and one working period. 

BF sensing systems exhibit distinctive characteristics that differentiate them from traditional wireless sensor networks (WSNs) and energy-harvesting WSNs (EH-WSNs). The main differences between these systems are summarized in Table~\ref{tab:bf-diff}. WSNs and EH-WSNs, typically powered by relatively rich energy supplies (e.g., rechargeable batteries and/or large capacitors), adopt active power management and duty cycling to control the device state (e.g., active, sleep) intentionally~\cite{2014-mobihoc-eh,2016-infocom-panda,2018-infocom-ehwsn,2019-sigcomm-nd}. On the other hand, BF sensing systems rely on devices powered by the scarce energy buffered in small capacitors (at the $\mu$F level) passively, resulting in an \emph{intermittent} operational style at the millisecond scale induced by frequent power failures. This intermittency poses challenges for programs running on BF devices, as progress can be hindered if intermediate execution states are not preserved correctly. To address this issue, researchers have proposed two main categories of techniques: checkpointing program state~\cite{2020-asplos-tics, 2019-pldi-samoyed,2022-pldi-wario} and transforming programs into idempotent tasks~\cite{2020-asplos-chrt,2016-oopsla-chain,2018-sensys-ink,sensys-2017-task}. Although these efforts focus on individual BF devices, they have established a solid foundation for the adoption of intermittently-powered BF devices in sensing applications.

\begin{table*}[!t]
\small
    \caption{BF Sensing Systems vs. Traditional WSNs}\label{tab:bf-diff}
    \centering
    \begin{tabular}{llll}
        \toprule
         & \textbf{WSNs} & \textbf{Energy-Harvesting WSNs} & \textbf{Battery-Free Sensing} \\
        \midrule
        \rowcolor[gray]{0.95}\textbf{Energy storage} & Batteries & Rechargeable batteries and/or large capacitors (mF/F) & Small capacitors ($\mu$F) \\
        \textbf{Power management} & Active & Active & Passive \\
        \rowcolor[gray]{0.95}\textbf{Operational style} & Duty-cycling, predicable & Duty-cycling, predictable & Intermittent, unpredictable \\
        \textbf{Working time scale} & Years & From minutes to hours & Milliseconds \\
        \bottomrule
    \end{tabular}
\end{table*}

\subsection{Synchronizing Intermittently-Powered BF Devices}
\label{sec:motivation:sync}
To establish communication between two intermittently-powered BF devices, they need to be in the active state simultaneously, necessitating the synchronization of their working periods. The challenge lies in the intermittent nature of their activity and the limited shared knowledge between the two devices, further complicating the synchronization process.

In scenarios where the energy condition is homogeneous across all devices, i.e., all BF devices operate on the same charging cycle, the problem becomes determining the offset between the working periods of two devices. To this end, Find utilizes a randomized approach to identify the working period offset between two devices~\cite{2021-nsdi-find}. In each charging cycle, the device employs a randomized approach to introduce a delay to its working period. This delay is generated from a well-tuned geometric distribution, allowing synchronization in a reasonable number of charging cycles on average. However, due to the random nature of this process, certain cases might require a significant or even infinite number of charging cycles for successful synchronization. In practice, BF devices are distributed geographically over a wide area. The heterogeneous energy conditions in different locations make it improbable for all BF devices to adhere to the same charging cycle. Consequently, ensuring constant reachability after synchronization becomes uncertain. To address this challenge, Bonito adopts a learning-based approach to predict the charging cycles based on statistical models~\cite{2022-nsdi-bonito}. In a nutshell, these approaches do not ensure successful synchronization, hence providing no reliability guarantee.

\subsection{Data Routing for Intermittently-Powered BF Sensing}
\label{sec:motivation:routing}
Typically, sensing systems collect data from various nodes and send it to a central sink node for processing and analysis. However, due to limited communication coverage, not all nodes can directly communicate with the sink. Consequently, messages generated by distant nodes need to be relayed through intermediate nodes to reach the sink. This challenge is the essence of the routing problem~\cite{2012-csur-wsn}. The main objective in routing is to find an optimal path that efficiently delivers messages to the sink in a timely manner.

The current routing protocols are not well-suited for intermittently-powered BF sensing due to the following reasons: (1) Existing protocols designed for WSNs or EH-WSNs rely on active duty cycling to regulate device state intentionally~\cite{2012-csur-wsn, 2018-tsn-ehwsn}. However, this approach is unsuitable for intermittently-powered BF devices because they have limited power supply and unpredictable power outages. (2) Most routing protocols do not consider the time overhead involved in synchronizing two intermittently-powered BF devices, which results in high message delivery latency. (3) Protocols developed for continuously-powered devices often involve complex logic and utilize expensive geo-location peripherals, making them unaffordable for intermittently-powered BF devices. 

Efficient data routing for intermittently-powered BF sensing involves addressing the following challenges:
\begin{enumerate}
    \item \emph{Minimizing node synchronization overhead:} To achieve efficient routing, messages need to be forwarded hop-by-hop on the path with minimal node synchronization overhead. Constantly synchronizing with the next hop for every transmission would be prohibitively costly for intermittently-powered BF devices.
    \item \emph{Optimal path selection:} Each node must be able to determine an efficient path towards the sink without knowing the location of other nodes. The goal is to find a path with the least number of hops required for the node to reach the sink. This way, data can be efficiently relayed to the central sink node.
\end{enumerate}

In light of the above challenges, we aim to design a routing scheme specific to intermittently-powered BF sensing, featuring mechanisms for node synchronization, message forwarding, and route construction.

\section{Reliable Node Synchronization}
\label{sec:sync}
In this section, we clarify the system model and introduce our synchronization algorithm based on the number theory.

\subsection{System Model}
\label{sec:sync:model}
We consider an intermittently-powered BF system comprising a fixed number of BF devices distributed across a large geographical area. These devices (nodes) possess identical hardware, including components like microcontrollers, capacitors, and energy-harvesting units, thus having roughly the same working period duration. We divide time into time slots where the working period for each node spans one time slot, while the charging period extends over multiple time slots since it is substantially longer than the working period as mentioned. The duration of the charging period, denoted as $t$, may differ among nodes based on their respective energy conditions. It might also fluctuate over time but usually at a much larger time scale (e.g., minutes). In practice, $t$ usually falls within the range of $[5, 500]$ ms with capacitors of 10s of $\mu$F typically used on BF devices and the difference in charging time between any two nodes can be bounded by a constant in a given environment, denoted as $\alpha$. In a typical stable environment without abnormal disruptions, $\alpha$ can be set to a small number~\cite{2021-nsdi-find, 2022-nsdi-bonito}, and in our setting, we use $\alpha=3$. This means the maximum charging time of a BF device is at most three times that of any other devices in the sensing system. Note that our proposed designs also work under more hostile environments with larger values for $\alpha$, albeit with possibly reduced efficiency.

In this work, we assume a time slot lasts for one millisecond, as also seen in existing works~\cite{2021-nsdi-find, 2022-nsdi-bonito}. We argue that this period is long enough for successful communication (e.g., sending a message out and receiving an acknowledgment back) between a pair of battery-free devices if their working periods are perfectly aligned. For example, BLE 5.0 is capable of transmitting up to 175 bytes within one millisecond~\cite{2021-specification-BLE5}. The slot length may deviate from $1$ ms in real-world scenarios. As long as a round-trip communication between two BF devices can be completed within one slot, our mode will apply.
However, achieving perfect alignment of time slots among battery-free devices is a challenging task in itself. Existing works have attempted to use an external light source that is perceivable at all battery-free devices for time slot alignment~\cite{2021-nsdi-find}, which may not always be available in different sensing scenarios. We assume such a time slot alignment mechanism is used whenever it is feasible. Nevertheless, our model and solutions can be generalized to cases without external mechanisms for time slot alignment. Assume the communication between two BF devices can be successful when the overlap of the working periods of these devices exceeds 50\% of one time slot (i.e., 500 $\mu$s), which is quite practical with BLE 5.0. By trying one more charging cycle, it is always ensured that two BF devices can communicate if we delay the working period of one of the BF devices by one time slot in the second charging cycle. Once one communication round is performed, the two BF devices can infer their time slot offset by monitoring the difference between their working time and communication time. For simplicity, we will use aligned time slots to describe our protocol design.

\begin{figure*}[t]
    \centering
    \includegraphics[page=1,width=0.85\textwidth]{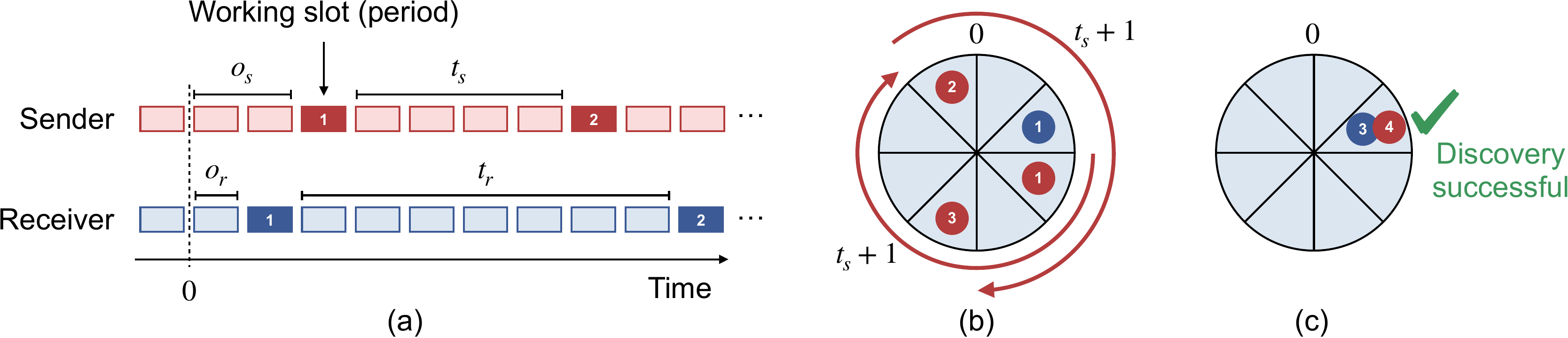}
    \caption{Illustration of the node synchronization procedure: (a) charging cycles of the sender and the receiver over time (boxes represent time slots), (b) the offsets of the sender's working periods in relation to the receiver's charging cycle hypothetically, and (c) successful discovery when the working period offsets of the sender and receiver match.}
    \label{fig:sync:procedure}
\end{figure*}

\subsection{Theoretical Analysis}
The Find approach shows statistically good node synchronization performance for cases with homogeneous energy conditions~\cite{2021-nsdi-find}. Extending it to heterogeneous cases is possible, but it has two main limitations: First, fine-tuning the randomization parameters for optimal performance becomes challenging, as parameters set for one neighbor may negatively affect others when a node synchronizes with multiple neighbors. Second, Find does not offer a guaranteed synchronization outcome due to its random nature. To address these limitations, we propose a deterministic approach that ensures a guaranteed success rate for node synchronization. Before presenting our design, we provide a theoretical analysis of the node synchronization problem.

\begin{table}[!t]
\caption{Notation} \label{tab:notation}
\centering 
\begin{tabular}{lp{6cm}}
\toprule
\textbf{Symbol} & \textbf{Meaning} \\
\midrule
$o_s$ / $o_r$ & Initial charging time offset for the sender/receiver\\
$t_s$ / $t_r$ & Charging time of the sender/receiver \\
$c\_inc$ & Current increment for co-prime\\
$\alpha$& Maximum difference factor for the charging time\\
$\delta$& Maximum increment for co-prime lookup\\
$b_{s}$ / $b_{r}$ & Bias of the sender working period with respect to the original working period of the sender/receiver\\
$d_{s}$ & Delay to apply to the sender to remain synchronized\\
$t\_dis$ & Discovery time for a BF device to find a neighbor\\
$t\_self$ & Charging time of a BF device\\
$t\_max$ & Maximum charging time of a BF device\\
$t\_wait$&  Remaining listening (waiting) slot for the neighbors \\
\bottomrule
\end{tabular}
\label{tab:TableOfNotationForMyResearch}
\end{table}

For a pair of nodes trying to synchronize, we call the node sending out messages the \emph{sender} and the one receiving messages the \emph{receiver}. Figure~\ref{fig:sync:procedure} depicts the charging cycles of the sender and receiver, along with the synchronization procedure. We select a random time reference (time $0$ shown in the figure) and denote the distance of the working period of a node from this reference as the initial offset. This initial offset is denoted by $o_s$ and $o_r$ for the sender and receiver, respectively. Figure~\ref{fig:sync:procedure}(a) shows that following the initial offset, the sender follows a charging cycle of length $t_s+1$ while the receiver follows a charging cycle of $t_r+1$. 
To facilitate understanding, we list all the notation throughout the paper in Table~\ref{tab:notation}.

One convenient way to represent the synchronization procedure is to choose the charging cycle of the receiver as a reference \emph{hypothetically} and analyze the offsets of the working periods of the sender on the receiver's charging cycle, as shown in Figure~\ref{fig:sync:procedure}(b). For the receiver, the working period offsets can be calculated as $(o_r + n \times (t_r + 1)) \% (t_r + 1)$, with $n \in \mathbb{N}$, which is always equal to $o_r$. That is, the location of the working periods of the receiver in Figure~\ref{fig:sync:procedure} is fixed. For the sender, the offsets of its working periods with respect to the charging cycle of the receiver can be calculated as  $(o_s + n \times (t_s + 1)) \% (t_r + 1)$ which also provides the positions of the working periods of the sender in Figure~\ref{fig:sync:procedure}. Intuitively, the sender cycles through the charging cycle of the receiver with a step of size $(t_s + 1) \% (t_r + 1)$. 

The goal is to find a time slot where both the sender and the receiver are in the working state. For the example shown in Figure~\ref{fig:sync:procedure}, the sender and receiver can communicate with each other in the fourth working period of the sender and the third working period of the receiver, as shown in Figure~\ref{fig:sync:procedure}(c).

An important question to consider is whether synchronization is guaranteed to succeed within a fixed amount of time. The answer, however, is affirmative only in specific cases. In particular, we demonstrate that:

\begin{theorem}
\label{thm:motivation:coprime}
    The node synchronization is guaranteed to be successful if $t_s+1$ and $t_r+1$ are coprime. 
\end{theorem}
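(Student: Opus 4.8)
The plan is to recast the synchronization condition as a single linear congruence modulo $t_r+1$ and then invoke the coprimality hypothesis to guarantee that it has a solution. Taking the receiver's charging cycle as the hypothetical reference (as in Figure~\ref{fig:sync:procedure}(b)), the receiver's working period sits at the fixed offset $o_r$, while the sender's working period on its $n$-th cycle sits at offset $(o_s + n(t_s+1)) \bmod (t_r+1)$. Since both working periods span exactly one aligned slot, a successful discovery occurs precisely when these two offsets coincide. Thus the theorem reduces to showing that there exists some $n \in \mathbb{N}$ with
\[
o_s + n(t_s+1) \equiv o_r \pmod{t_r + 1}.
\]

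First I would rewrite this as $n(t_s+1) \equiv o_r - o_s \pmod{t_r+1}$, isolating the unknown number of cycles $n$. The key step is then to observe that the coprimality assumption $\gcd(t_s+1,\, t_r+1)=1$ makes $t_s+1$ invertible modulo $t_r+1$. Multiplying both sides by the inverse $(t_s+1)^{-1}$ yields the explicit solution $n \equiv (t_s+1)^{-1}(o_r - o_s) \pmod{t_r+1}$, which is moreover unique in the range $\{0, 1, \dots, t_r\}$. This is the \emph{full-cycle} property of the underlying linear congruential generator: as $n$ ranges over a complete residue system, the quantity $(o_s + n(t_s+1)) \bmod (t_r+1)$ visits every offset in $\{0, \dots, t_r\}$ exactly once, so in particular it must hit $o_r$.

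Finally I would translate the algebra back into the timing statement. Because a matching value of $n$ is guaranteed to appear within the first $t_r+1$ sender cycles, synchronization is not merely possible but occurs within a fixed, bounded number of charging cycles, establishing the ``fixed amount of time'' claim anticipated just before the theorem. I do not expect a serious obstacle once the reduction to a linear congruence is in place; the only care needed is in justifying that the effective step size $a=(t_s+1)\bmod(t_r+1)$ satisfies $\gcd(a,\, t_r+1)=\gcd(t_s+1,\, t_r+1)$ (so that coprimality is inherited after reduction), and in confirming that equal offsets genuinely imply overlapping working slots under the aligned-slot model assumed in Section~\ref{sec:sync:model}.
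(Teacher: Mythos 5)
Your proof is correct, and it reaches the same reduction as the paper---both arguments take the receiver's charging cycle as the reference frame and ask when the sender's offset sequence $(o_s + n(t_s+1)) \bmod (t_r+1)$ hits the fixed receiver offset $o_r$. Where you diverge is in how the full-cycle property is established. The paper formulates the offset recurrence as a linear congruential generator with $a=1$, $c=t_s+1$, $m=t_r+1$ and then invokes the Hull--Dobell theorem, checking that conditions \textbf{C2} and \textbf{C3} hold trivially so that only the coprimality condition \textbf{C1} remains. You instead solve the congruence $n(t_s+1) \equiv o_r - o_s \pmod{t_r+1}$ directly: coprimality makes $t_s+1$ a unit modulo $t_r+1$, so $n \equiv (t_s+1)^{-1}(o_r - o_s)$ exists and is unique in $\{0,\dots,t_r\}$. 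Your route is more elementary and self-contained (it proves the $a=1$ special case of Hull--Dobell rather than citing it), and it yields something the paper's proof leaves implicit: an explicit worst-case bound of $t_r+1$ sender cycles, with uniqueness of the matching cycle index---this is exactly the bound the paper later relies on when it argues that ``in the worst case, $m$ random numbers need to be generated'' to set the retry threshold in Swift-sync. What the paper's framing buys in exchange is the connection to the LCG literature, which motivates the subsequent protocol design (incrementing $c$ by one slot to manufacture a coprime pair when the condition fails). Your closing caveats are also well placed: the identity $\gcd((t_s+1)\bmod(t_r+1),\,t_r+1)=\gcd(t_s+1,\,t_r+1)$ holds because the gcd is invariant under reduction of one argument modulo the other, and the aligned-slot model of Section~\ref{sec:sync:model} is precisely what licenses the step from ``equal offsets'' to ``successful communication.''
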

\begin{proof}
    We prove this by mapping the node synchronization problem to the linear congruential generator (LCG) problem. We refer readers to Chapter 3 of~\cite{1997-art-lcg} for a detailed explanation of LCG. An LCG is an algorithm that produces a sequence of pseudo-randomized numbers defined by the recurrence relation:
    \begin{equation}
        X_{n+1} = (aX_n + c) \mod m,
    \end{equation}
    where $X$ is the sequence of the pseudo-random numbers, $m$ is the modulus, and $a \in (0,m)$ and $c \in [0,m)$ the multiplier and the increment, respectively. $X_0 \in [0,m)$ is the seed for the sequence generation. For an LCG to obtain its maximum length $m$, a.k.a. full cycle LCG, it must satisfy the following conditions according to the Hull-Dobell Theorem~\cite{1962-siam-generator}:
    \begin{enumerate}[label={\textbf{C\arabic*:}}]
        \item $c$ and $m$ are coprime;
        \item $a-1$ is divisible by all prime factors of $m$; and
        \item if $4$ divides $m$, then $4$ divides $a-1$. 
    \end{enumerate}
    The node synchronization problem can be reduced to the LCG problem if we set $a=1$, $c = t_s+1$, and $m=t_r + 1$. $X$ represents the offset of the working period of the sender in relation to the charging cycle (i.e., $t_r+1$) of the receiver. In this case, obtaining the maximum length for this LCG means that $X$ will traverse all possible numbers in the range $[0, t_r+1)$. This ensures that starting from seed $X_0 = o_s$, for some $n \in \mathbb{N}$ we will have $X_n = o_r$. In the context of the synchronization problem, this means that there will always be a case where the working period offsets of the sender and the receiver match with each other. This requires satisfying the above conditions for the Hull-Dobell Theorem where \textbf{C2} and \textbf{C3} are satisfied trivially and according to \textbf{C1} we require $t_s+1$ and $t_r+1$ to be coprime, hence completing the proof.
\end{proof}

This result indicates that node synchronization is naturally guaranteed only when $t_s+1$ and $t_r+1$ are coprime. However, if this condition is not satisfied, the synchronization guarantee is lost, and more importantly, the nodes are not able to determine whether the condition holds or not in advance. Therefore, to ensure successful synchronization in all cases, a mechanism must be devised. In the following sections, we illustrate how we can achieve this objective by building upon Theorem~\ref{thm:motivation:coprime} and introduce our proposed synchronization protocol.

\subsection{Reliable Node Synchronization with \sync}
We propose \sync, a reliable synchronization protocol with guaranteed success. Our idea is inspired by Theorem~\ref{thm:motivation:coprime}, where in cases of the coprime condition not being met, we come up with a simple trial-and-error strategy to create such a condition artificially. Note that our algorithm does not rely on any assumptions regarding the coprime condition, which will unlikely hold in real-world scenarios. 

The idea of \sync is as follows: We start by assuming that $c = t_s+1$ and $m=t_r+1$ are coprime and try to validate this assumption within a worst-case bound. If the synchronization is not successful within this bound, we increase $c$ by one slot and repeat the above procedure. The worst-case bound is calculated as the maximum number of slots needed for the working period offsets of the sender and receiver to match, assuming $c$ and $m$ are coprime. Following the LCG problem, for every charging cycle of the sender, the working period offset of the sender corresponds to a random number generated by the LCG. In the worst case, $m$ random numbers need to be generated before we match with the working period offset of the receiver. Since the sender has no information about the charging cycle of the receiver in reality, we assume $t_r$ is bounded by constant factor times of $t_s$ where the constant is typically quite small, e.g., three. 
Combined with the charging time range of $[5,500]$ slots, the upper bound for $t_s+1$ is thus $1503$ slots. Through our empirical observation, we notice that the maximum coprime gap for two numbers under $1503$ is $10$.
By increasing $c$ with a step of one slot upon synchronization failure, it is ensured that there will be a case where $c$ and $m$ are coprime within at most ten attempts, meeting the conditions for Theorem~\ref{thm:motivation:coprime} and thus making the synchronization successful. Note that the assumptions of the charging time ranges are realistically derived from real-world conditions~\cite{2021-nsdi-find,2022-nsdi-bonito}.

\begin{algorithm}[!t]
    \small
    \caption{Node synchronization mechanism \ours-sync}\label{algo:sync}
    \begin{algorithmic}[1]
    \State{\textbf{Input:} node's charging time: $t\_{self}$, maximum charging cycle multiplier: $\alpha$, maximum coprime gap: $\delta$}
    \State{\textbf{Output:} discovery time: $t\_{dis}$, neighbor's ID: $id\_{nbr}$, neighbor's charging time: $t\_{nbr}$)}
    
    \State{$c\_inc \gets 0$, $cycle\_count \gets 0$, $flag \gets \textrm{false}$}
    \State{$id\_{nbr} \gets -1$, $t\_{nbr} \gets 0$, $t\_{dis} \gets 0$}
    \While{$flag == \textrm{false} $}
        \MyComment{Maximum tries reached, step forward by one slot}
        \If{$cycle\_count > \alpha (t\_self + 1)$}
            \If{$c\_inc > \delta$}
                \State{No neighbors available, return error}
            \Else
                \State{$c\_inc \gets c\_inc + 1$}
                \State{$cycle\_count \gets 0$}
            \EndIf
        \EndIf
        \State{$cycle\_count \gets cycle\_count + 1$}
        \State{Delay the working period by $c\_inc$ time slots}
        \MyComment{Coprime condition met, working period offsets matched}
        \If{neighbor found}
            \State{$flag \gets \textrm{true}$}
            \State{$t\_dis \gets (c\_inc + t\_self + 1)$}
            \State{Record neighbor's ID $id\_nbr$ and charging time $t\_nbr$}
        \EndIf
    \EndWhile
    \end{algorithmic}
\end{algorithm}

The workings of \sync are listed in Algorithm~\ref{algo:sync}. We keep searching until a neighbor is synchronized (lines 5--23). In every round, we check if we have reached the maximum attempts for the current increment for $c$ (line 7), i.e., $c\_inc$ which is assumed to be bounded by $\alpha$ times the charging cycle of the node itself, $\alpha (t\_{self} + 1)$.
If so, we further check if we have reached the maximum necessary increment for $c$ (line 8), which is bounded by the coprime gap $\delta$. If that is true, we can conclude that there are no neighbors available and the synchronization algorithm terminates with an error (line 9). Within the maximum coprime gap, the increment grows by one (line 11) and we use the new increment to continue the search. In every charging cycle of the node, we increase the number of attempts (line 15) by one and delay the working period of the node by $c\_inc$. If a neighbor node is found, we stop the synchronization process and return the discovery time, the ID of the found neighbor, and the charging time of the found neighbor to facilitate future communications with this neighbor (lines 18--22). If no neighbors have been found after we have incremented $c$ by more than $\delta$, we can conclude that there are no reachable neighbors, and an error is returned. This is because \ours-sync guarantees that if a neighbor is available, it will always be found within the maximum number of attempts---a big advantage of a deterministic algorithm.

\section{Message Forwarding}
\label{sec:forward}
\begin{figure*}[!th]
    \centering
    \includegraphics[page=3,width=0.95\textwidth]{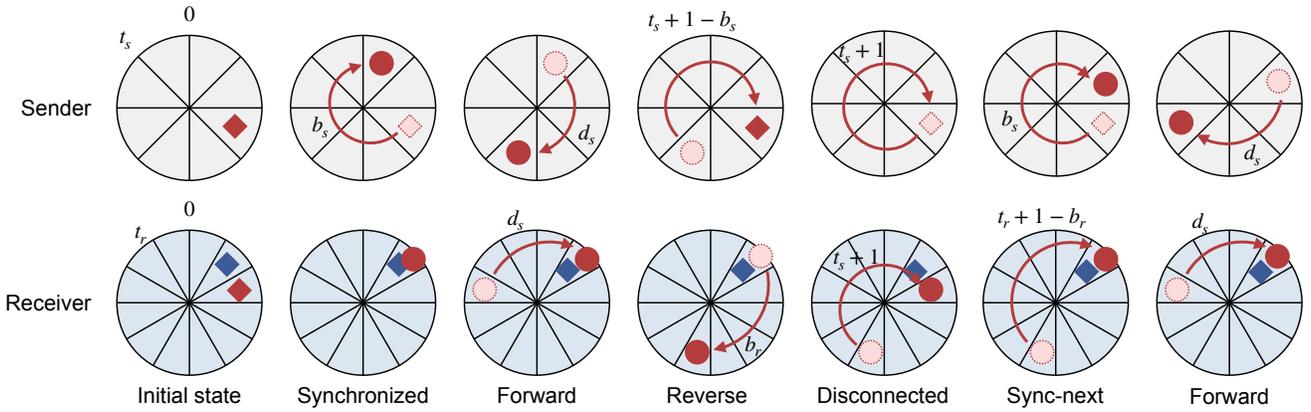}
    \caption{Illustration of the state changes with the \ours-forward mechanism, where we show how the charging cycles of the sender should be adapted to synchronize with its next hop (e.g., the receiver) directly and reverse back when the message forwarding is completed. The receiver is passive and the sender uses the receiver's charging cycle as its reference.}
    \label{fig:forward:states}
\end{figure*}

In a route towards the sink, each node must synchronize with its next hop before sending out messages with existing routing protocols, resulting in significant time overhead. However, we posit that node synchronization is only necessary for the initial communication with the next hop. Once synchronized, the node can retrieve and record the charging cycle information and the relative working period of the next hop. This allows for efficient communication during periods when the charging cycles remain stable, eliminating the need for repeated synchronizations for every hop on the route.

\subsection{High-Level Idea}
\label{sec:forward:idea}
We explain the high-level idea through the example shown in Figure~\ref{fig:forward:states}. We start with the initial state, where both the sender and receiver follow their own charging cycles and work periodically in time slots indicated with the colored diamond (red: sender, blue: receiver). Without loss of generality, we assume $t_s < t_r $ in this example. The zero point is just an arbitrary reference point. We assume the receiver is passive and thus, it does not adjust its working periods in the listening state. We denote by $b_s$ the bias of the working period of the sender w.r.t. its initial working period position. Using the charging cycle of the receiver as a reference, the sender's working slot unlikely matches that of the receiver, and we denote by $b_r$ the bias of the sender's working period w.r.t. the working period of the receiver, reflected on the receiver's charging cycle. Each node in the system can be in one of the following states, with state transitions depicted in Figure~\ref{fig:forward:states}.
\begin{description}
    \item[Synchronized] During synchronization, the sender follows the \ours-sync algorithm (Algorithm~\ref{algo:sync}) where $b_s$ is updated as 
    $b_s = (b_s + t_s + 1 + c\_inc)\%(t_s + 1)$. When the synchronization is completed, the sender's working period is aligned with that of the receiver and we set $b_r = 0$. 
    \item[Forward] Upon synchronization, the sender maintains the synchronized state with the receiver by delaying its working period by $d_s = (t_r - t_s) \% (t_r + 1)$ slot(s) in every charging cycle. As a result, $b_s$ needs to be updated with $b_s = (b_s + d_s) \% (t_s + 1)$ and $b_r$ remains zero. The sender remains synchronized with the receiver in every charging cycle and messages can be delivered between the two nodes.
    \item[Reverse] Once the message forwarding is done, the sender reverses back its working period to its initial state so that it can be synchronized with its upstream nodes. Since $b_s$ keeps a record of the bias of the current working period of the sender w.r.t. the initial value zero, we can simply delay the working period of the sender by $t_s + 1 - b_s$ slot(s) in the next charging cycle. As a result of that, we set $b_r = (t_s + 1 + t_s + 1 - b_s)\%(t_r + 1)$ where the first $(t_s + 1)$ component comes from the fact that one charging cycle is needed for applying the artificial delay and set $b_s = 0$.
    \item[Disconnected] The reverse operation disconnects the two nodes. In this state, the sender may serve as a receiver for an upstream node. To be able to fast sync with its next hop without running \ours-sync again, the sender updates $b_r$ with 
    $b_r = (t_s + 1 + b_r)\%(t_r + 1)$ in each of its charging cycles. 
    \item[Sync-next] When the sender decides to forward messages to its next hop again, the sender simply delays its working period by $t_r + 1 - b_r$ slot(s), and the working periods of the two nodes become aligned. Here, we set $b_s = (t_r + 1 - b_r)\%(t_s + 1)$ and $b_r = 0$. After that, the sender enters the Forward state for delivering messages to the receiver, as already described above.
\end{description}

\subsection{Message Forwarding with the \forward}
\label{sec:forwarding:protocol}
Based on the above concepts, we introduce \forward, a message-forwarding protocol that operates on each node and comprises two key procedures: \emph{sending} and \emph{receiving} during the steady state. In this protocol, nodes alternate their status between these two procedures, as detailed below.

\begin{algorithm}[!t]
\small
\caption{Message forwarding protocol \forward}\label{algo:forward}
\begin{algorithmic}[1]
\State{\textbf{Input:} charging time of the current node and its next hop: $t_s$, $t_r$, current node's working period bias w.r.t. its initial state $b_s$ and its next hop $b_r$, id of the next hop: $id\_next$}
\Procedure{Send}{$queue$}
    \State{$attempt \leftarrow 0$}
    \While{$queue.size \ne 0 $}
        \State{$d_s \gets |t_r - t_s|\%(t_r + 1)$}
        \State{Delay the working period by $d_s$ slot(s) in every cycle}
        \State{$msg \gets queue.\textrm{pop}()$, $msg.id\_next \gets id\_next$}
        \State{Send message $msg$ and wait for ACK}
        \If {ACK timeout (e.g., due to collision)}
            \State{$attempt \leftarrow attempt + 1$, retry}
            \If{$attempt > max\_attempt$}
                \State{Break}
            \EndIf
        \EndIf
        \State{$b_s \gets (b_s + d_s)\%(t_s + 1)$}
        \MyComment{Forwarding done, reverse to receiving}
        \If {$queue.size == 0$}
            \State{Break}
        \EndIf
    \EndWhile
    \State {$b_r \gets (t_s + 1 + t_s + 1 - b_s) \% (t_r + 1)$}
    \State {$b_s \gets 0$}
    \State{Reverse to the receiving state by delaying $(t_s + 1 - b_s)$ slot(s)}
\EndProcedure
\Procedure{Receive}{$queue$}
    \State{$cycle\_wait \gets 0$, $flag\_receive \gets \textrm{false}$}
    \While{true}
        \State{Listen and receive message $msg$}
        \If{$msg.id\_next == self.id$}
            \State{$flag\_receive \gets \textrm{true}$}
            \State{$queue.\textrm{push}(msg)$ and send ACK}
            \State{$b_{r} = (t_s + 1 + b_r) \% (t_r + 1)$}
            \If{$msg.is\_last == \textrm{true}$}
                \State{$flag\_receive \gets \textrm{false}$}
                \State{Break}
            \EndIf
        \EndIf
        \If{$flag\_receive == \textrm{false}$}
            \State{$cycle\_wait \gets cycle\_wait + 1$}
            \If{$cycle\_wait > max\_wait$}
                \If{$queue.size == 0$}
                    \State{Continue}
                \EndIf
                \State{Break}
            \EndIf
        \EndIf
    \EndWhile
    \State{$b_s \gets (t_r + 1 - b_r) \% (t_s + 1)$}
    \State{$b_r \gets 0$}
    \State{Reverse to the sending state by delaying $t_r + 1 - b_r$ slot(s)}
\EndProcedure
\end{algorithmic}
\end{algorithm}

\subsubsection{Sending Procedure} 
The top of Algorithm~\ref{algo:forward} lists the sending procedure \textsc{Send($queue$)} of the \ours-forward protocol. For a given node, we assume it is already in sync with its next hop and has entered the Forward state. If the queue (for caching messages to be sent out) is not empty, the node maintains the synchronized state with its next hop (lines 5--6). In every charging cycle, the node pops a message from the queue, sets the metadata of the message, and sends the message out (lines 7--8). If it times out before receiving the ACK (e.g., due to collisions), it will keep retrying until the maximum number of attempts reaches an upper threshold and the \textsc{Send} procedure aborts directly (lines 9--14). Meanwhile, it updates $b_s$ as we explained before (line 15). When the queue becomes empty, the sending procedure is terminated and the node reverses its state and starts to follow the receiving procedure (lines 17--19). The node also resets $b_r$ and $b_s$ as explained in the Reverse state (lines 21--23). In case the next hop cannot be reached after a few charging cycles, the sender concludes that the charging cycle of the sender or receiver, or both has changed and calls \sync to re-synchronize with its next hop.

\subsubsection{Receiving Procedure}
The bottom of Algorithm~\ref{algo:forward} lists the receiving procedure \textsc{Receive($queue$)} of the \ours-forward protocol. If the node is synchronized with a previous hop, the node listens and receives incoming messages (line 28). If the incoming message is valid, the node pushes the message to its queue and sends an acknowledgment back (lines 29--31). The node also updates its working period bias with respect to its own next hop $b_r$, where in every charging cycle, the node advances $(t_s + 1 + b_r)\%(t_r + 1)$ slot(s) on the charging cycle of its next hop (line 32). Upon the last message from the synchronized previous hop, the node leaves the connection (lines 33--36) and then reverses to the sending procedure directly. If the expected receiving is not successful, the node counts down its waiting cycles. When the number of waiting cycles reaches the maximum (empirically set to $\alpha(t_s + 1)$) (lines 40--45), the node performs the following depending on the queue status: If there are no messages in its queue, the node keeps waiting (line 42); otherwise, the node switches to the sending state (line 44). As a result, the node applies Sync-next to synchronize its working time with its next hop and resets $b_s$ and $b_r$ as explained in the Sync-next state (lines 48--50).

\section{Route Construction}
\label{sec:route}
\begin{algorithm}[!t]
\small
\caption{Route construction protocol \ours-route}\label{algo:route}
\begin{algorithmic}[1]
\State{\textbf{Input:} node's id: $id\_self$, node's charging time and its upper-bound: $t\_self$, $t\_max$, maximum coprime gap: $\delta$}
\State{\textbf{Output:} node's next hop: $id\_next$}
\State{$cur\_hop \gets \infty$, $t\_wait \gets 0$, $id\_next \gets \textrm{null}$, $c\_inc \gets 0$}
\Procedure{MsgHandler}{$msg$}
    \If{$cur\_hop > msg.hop + 1$}
        \State{$id\_next \gets msg.id$}
        \State{$cur\_hop \gets msg.hop + 1$}
        \State{$t\_wait \gets msg.t\_wait$}
    \EndIf
    \While{$t\_wait > 0$}
        \State{$t\_wait \gets t\_wait - 1$}
        \State{Call \textsc{MsgHandler()} upon new messages}
    \EndWhile
    \State{\textsc{Broadcast}($id\_self$, $cur\_hop$)}
\EndProcedure

\Procedure{Broadcast}{$id\_self$, $cur\_hop$}
    \MyComment{Maximum attempts to ensure all neighbors are reached}
    \While{$c\_inc < \delta$}
        \State{$cycle\_count \gets 0$}
        \While{$cycle\_count < t\_max$}
            \State{$cycle\_count \gets cycle\_count + 1$}
            \State{$t\_wait \gets t\_max(\delta-c\_inc) - cycle\_count$} 
            \State{Delay working period by $c\_inc$}
            \State{Broadcast $\langle id\_self,cur\_hop, t\_wait \rangle$}
        \EndWhile
        \State{$c\_inc \gets c\_inc + 1$}
    \EndWhile 
\EndProcedure
\end{algorithmic}
\end{algorithm}

We propose a route construction protocol \route, which is simple to implement and ensures each node reaches the sink with the minimum hops. The workings of \route are listed in Algorithm~\ref{algo:route}. Each node runs two procedures: \textsc{MsgHandler} and \textsc{Broadcast}. The \textsc{MsgHandler} procedure is responsible for processing incoming messages. Upon receiving a message, we check if a shorter path has been found with the received message. If so, we keep the shorter path and update the hop count (lines 5--9) and wait for a certain amount of time (i.e., $t\_wait$ cycles of $t\_max$ slots each) announced by the message source. During the waiting period, if another message with an even shorter path is received, we update the hop count and restart the waiting timer with the waiting time contained in the new message (lines 10--13). Upon timeout, the node broadcasts its hop count with the \textsc{Broadcast} procedure.

The \textsc{Broadcast} procedure performs maximum attempts to ensure that the message of a node can reach all the node's potential neighbors (lines 18--27). We assume the maximum charging period is $t\_max$ long and hence, the maximum number of rounds is bounded by $\delta \cdot t\_max$ where $\delta$ is the maximum coprime gap. The waiting time necessary for the message receiver to wait before they broadcast is the time left for the current node to finish its broadcast (line 22). This is to avoid a downstream node starting to broadcast its hop count before it hears from all possible upstream nodes. In every charging cycle, the node delays its working period by $c\_inc$ and sends out the hop count message together with the expected waiting time, hoping to reach a potential neighbor. $c\_inc$ is incremented until it reaches $\delta$ to ensure a coprime to the neighbor's charging cycle is covered for sure. Therefore, the following theorem holds.

\begin{theorem}
    \route ensures that the hop count of the path from each node to the sink is minimized.
\end{theorem}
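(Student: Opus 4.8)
The plan is to prove the claim by comparing the final value of $cur\_hop$ stored at each node against the true minimum hop distance to the sink. First I would fix notation: let $G$ be the graph whose vertices are the BF nodes and whose edges join pairs of nodes that can communicate (i.e., successfully synchronize), and let $d(v)$ denote the length of the shortest path from $v$ to the sink in $G$. I assume $G$ is connected, so that $d(v)$ is finite for every $v$. The goal then reduces to showing that, after the protocol stabilizes, every node $v$ satisfies $cur\_hop = d(v)$. I would split this into a \emph{lower bound} ($cur\_hop \ge d(v)$, so the protocol never underestimates the true distance) and an \emph{upper bound} ($cur\_hop \le d(v)$, so the protocol eventually reaches the optimum), each proved by induction on $d(v)$.

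For the lower bound I would maintain the invariant that whenever a node holds a finite $cur\_hop = h$, the chain of $id\_next$ pointers starting at that node forms a concrete path to the sink of length exactly $h$. This holds because a node only adopts $cur\_hop = msg.hop + 1$ together with $id\_next = msg.id$ (lines 5--9), so the hop count strictly decreases by one along each next-hop pointer; the chain therefore cannot cycle and must terminate at the sink with hop count $0$. Since $d(v)$ is by definition the minimum over all paths, the existence of such a path of length $h$ immediately yields $h \ge d(v)$.

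For the upper bound I would argue by induction on $k = d(v)$. The base case is the sink itself, which holds $cur\_hop = 0$. For the inductive step, a node $v$ with $d(v) = k+1$ has, by the definition of shortest path, a neighbor $u$ with $d(u) = k$; by the induction hypothesis $u$ eventually holds $cur\_hop = k$ and enters \textsc{Broadcast}. The decisive ingredient is \emph{reliable delivery}: the \textsc{Broadcast} procedure sweeps $c\_inc$ from $0$ up to $\delta$, and by Theorem~\ref{thm:motivation:coprime} at least one such increment makes the sender's and $v$'s adjusted charging cycles coprime, guaranteeing that one of $u$'s broadcasts lands in a slot where $v$ is awake. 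Upon receiving this message, $v$ sets $cur\_hop \le k+1$, and combined with the lower bound this forces $cur\_hop = k+1 = d(v)$.

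The main obstacle I anticipate is making the \emph{timing} argument rigorous, rather than the two inductions themselves. I would need to verify that the waiting schedule $t\_wait = t\_max(\delta - c\_inc) - cycle\_count$ actually enforces the intended layered ordering---that a node at distance $k+1$ does not complete its wait and commit to its own broadcast until every distance-$k$ neighbor has finished announcing its final, already-minimized hop count---under arbitrary heterogeneous charging periods bounded by $t\_max$. This amounts to checking that $t\_wait$ is a valid upper bound on the residual broadcast time of any upstream node for every admissible charging-cycle value, and that any transient overestimate of $cur\_hop$ (from hearing a not-yet-finalized neighbor) is subsequently corrected by a re-broadcast without inducing oscillation---a property again secured by the strictly-decreasing-hop-count invariant established for the lower bound.
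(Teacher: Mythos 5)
Your two inductions are a reasonable skeleton, but the proposal defers exactly the step that constitutes the paper's entire proof, and the fallback you offer in its place is invalid for this protocol, so there is a genuine gap. The decisive issue is that a node cannot listen while it is broadcasting: the coprime-sweep delivery guarantee you invoke via Theorem~\ref{thm:motivation:coprime} only applies when the receiver passively follows its own unperturbed charging cycle, and it breaks down once the receiver has entered its own \textsc{Broadcast}, since then both nodes are artificially shifting their working periods. Hence the inductive step ``$v$ receives $u$'s final announcement'' is not automatic; it holds only if the waiting schedule guarantees that $v$ has not yet started (and committed to) its own broadcast when its minimal-height neighbor $u$ broadcasts. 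Establishing that layered ordering is precisely what the paper's proof does: it bounds every node's broadcast duration by $t\_bcast = (t\_max + 1)\cdot \delta \cdot t\_max$ slots and argues by contradiction (Figure~\ref{fig:route:proof}) that a node F choosing a same-height next hop D must have started broadcasting at least $2\times t\_bcast$ slots after D's parent B began, by which time F's true parent C---at the same height as B, hence finishing within the same window---is guaranteed to have completed its broadcast, so F could not have missed it. You correctly name this as ``the main obstacle,'' but naming it is not proving it; without this quantitative lemma your upper bound cannot be closed.

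Moreover, your escape hatch---that a transient overestimate is ``subsequently corrected by a re-broadcast without inducing oscillation''---contradicts both the protocol model and the paper's own measurements. A node that starts broadcasting too early simply misses the better announcement, and nothing in Algorithm~\ref{algo:route} re-offers it: the upstream node performs its coprime sweep once and moves on. This is exactly why the no-wait variant evaluated in Section~\ref{sec:eval}, in which nodes do broadcast after every update, still produces suboptimal routes. So the Bellman--Ford-style self-stabilization you appeal to does not exist here; optimality comes not from eventual convergence but from the timing discipline that prevents a node from ever committing before all lower-height neighbors have spoken, which is the content of the paper's proof.
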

\begin{proof}
    We define the height of a node as the least hop count to the sink. Since \route always keeps the path with the least hop count upon receiving messages, the proof can be conducted by showing that a node can't choose its next hop with the same height. We consider the contradiction scenario shown in Figure~\ref{fig:route:proof}. 
    Let us assume hop D-F is part of the path of a node. This is only possible if node F has received the broadcast from node D and has started its own broadcasting before receiving the broadcast from C (hence it missed the broadcast message from C). This means that node F must have waited for D to finish its broadcasting and node D must have also waited for B to finish its broadcasting. Since in our route construction protocol, we set the maximum broadcast time of a node to the bound $t\_bcast = (t\_max + 1)\cdot \delta \cdot t\_max$ slots. This is based on the worst case where the node has the maximum charging time in the system. Hence, node F has to wait at least $2\times t\_bcast$ slots after B starts to broadcast before it can start broadcasting. However, with $2\times t\_bcast$, node C is guaranteed to have finished its broadcasting (since B and C are at the same height and thus require the same amount of time for finishing broadcasting), which contradicts that node F has not received the broadcast from C. This completes the proof. 
\end{proof}

\begin{figure}[!t]
    \centering
    \includegraphics[page=2,width=0.3\textwidth]{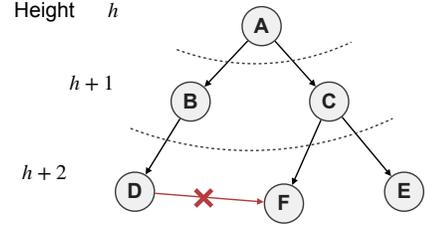}
    \caption{Illustration for a contradiction where we show that hop D-F cannot exist on any constructed route.}
    \label{fig:route:proof}
\end{figure}

\section{Performance Evaluation}
\label{sec:eval}
We implement \ours in OMNeT++ and perform extensive experiments with a variety of real-world parameter setups and compare \ours with state-of-the-art approaches retrofitted with intermittency support. We choose OMNeT++ since it allows us to focus on the key aspects of the protocol and assess the performance on large scales. In the following, we first explain our evaluation setups.
For \ours-sync we also implement a prototype and test it on a hardware testbed.
Then, we compare \ours with baselines and study the impact of different environmental factors.

\subsection{Experimental Setups}
Following findings in existing works~\cite{2021-nsdi-find,2022-nsdi-bonito}, we assume the charging time $t$ of all  nodes falls into the range of $[5, 500]$ time slots where a time slot is set to 1ms. Additionally, for two nodes $i$ and $j$, we assume their charging time difference is bounded by a small constant as mentioned before, i.e., $t_j/\gamma \leq t_i \leq \gamma t_j$ where $\gamma \in \{2,3,4\}$. We consider three ranges of node charging time, $[5,15]$ for good energy conditions, $[40,120]$ for medium energy conditions, and $[166,498]$ for poor energy conditions. For evaluating the node synchronization protocol, we use the public dataset for real-world charging times of BF devices~\cite{2021-nsdi-find} and set $\gamma$ to three. 

We consider two cases for the sensing system configuration: \emph{square} and \emph{rectangle}. In the square case, 100 nodes are randomly distributed in a square area of $45$ m $\times$ $45$ m, while the same number of nodes are randomly distributed in a rectangle area of $25$ m $\times$ $100$ m in the rectangle case. In both cases, the sink node sits in the middle of the area's right edge. These two cases show how \ours performs under different environments with a varying average number of hops from each node to the sink. We also vary BF devices' communication range (10m, 15m, 20m) to analyze its impact on the routing performance. Our baselines are chosen according to the evaluation goals, as detailed in the following subsections.

\subsection{Performance of \sync}
We first evaluate the performance of \sync by comparing it with Find~\cite{2021-nsdi-find}, 
which is the state-of-the-art node synchronization mechanism between BF devices. We adapt Find to scenarios with heterogeneous energy conditions. We consider two nodes to be synchronized. With Find nodes delay for a random number of slots following a geometric distribution with fine-tuned parameters. We consider all three energy conditions (i.e., good, medium, and poor) where we filter out the charging time pairs (each for one of the nodes) provided in a public dataset~\cite{2021-nsdi-find} falling into these energy conditions. Due to the randomization of Find, for each pair of charging times, we repeat simulations 1000 times to avoid noise. 

\begin{figure*}[!t]
    \centering
    \includegraphics[width=0.8\textwidth]{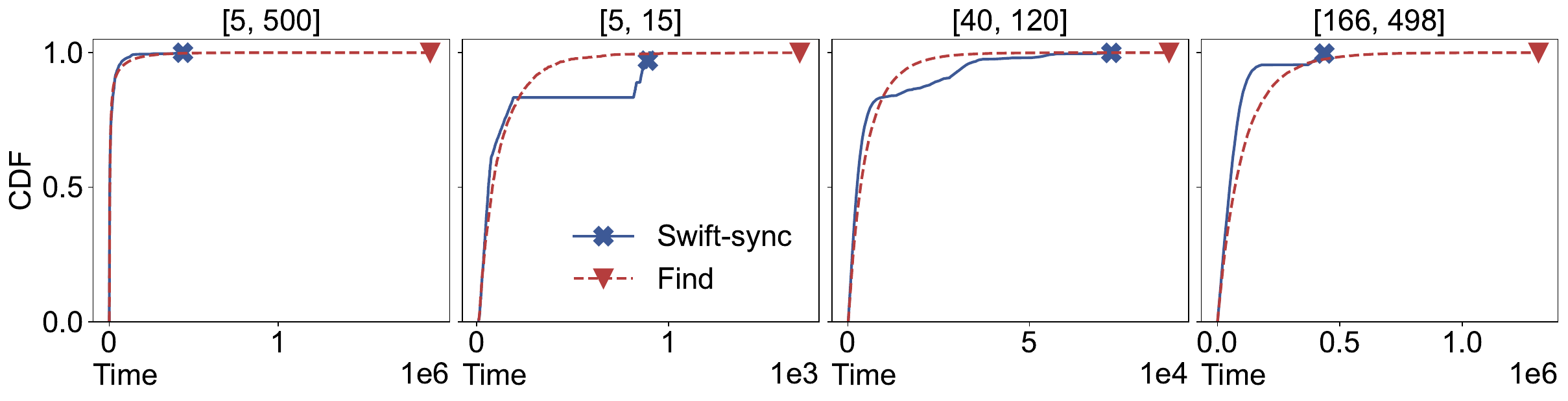}
    \caption{Node synchronization time (in slots) with \sync and Find~\cite{2021-nsdi-find} under varying energy condition ranges. The markers denote the 99th percentile.}
    \label{fig:eval:sync}
\end{figure*}

Figure~\ref{fig:eval:sync} depicts the cumulative distribution function (CDF) of the node synchronization time (in slots) with \sync and Find under the three energy conditions as well as the aggregated results. We observe that \sync outperforms Find in three aspects: (1) Under all three energy conditions, \sync achieves lower synchronization time than Find at around the 80th percentile. We can also observe the long tail of the latency CDF of Find, demonstrating the uncertainty caused by randomness. 
(2) Under good and median energy conditions (ranges $[5, 15]$ and $[40, 120]$), \sync outperforms Find under around the 80th percentile, while \sync is better than Find under around 98th percentile under poor energy conditions (range $[166,498]$). Compared with Find, \sync reduces the average synchronization time by more than $20\%$. 

\begin{figure}[!t]
    \centering
    \includegraphics[width=0.3\textwidth]{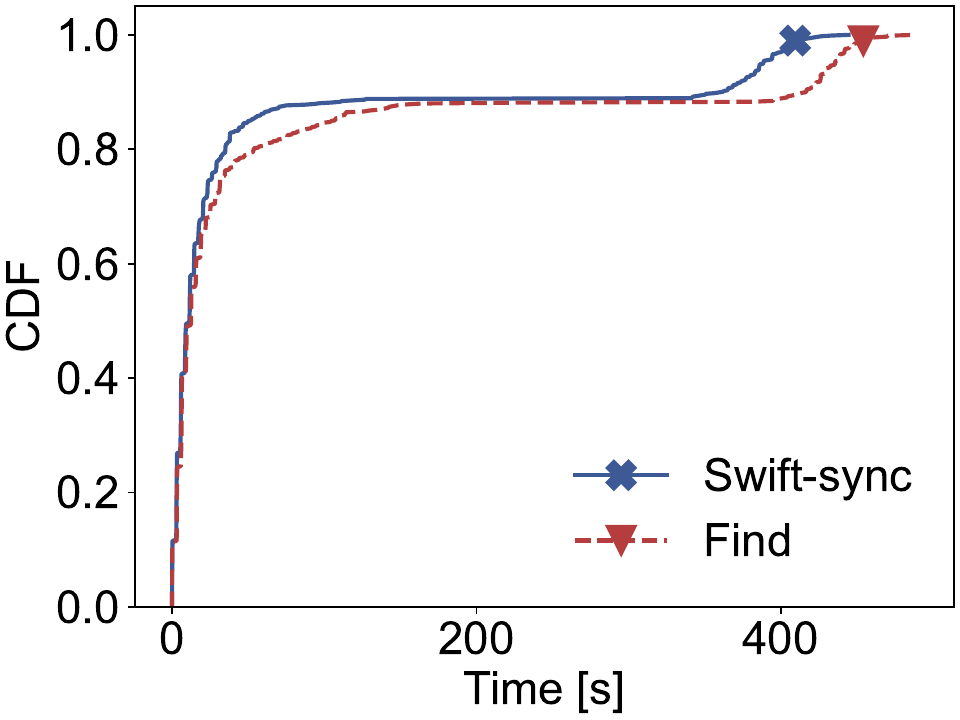}
    \caption{Node synchronization time on the hardware prototype.}
    \label{fig:syn:testbed}
\end{figure}

To further validate \ours-sync, we develop a prototype with two intermittently powered BF devices. The BF device consists of three parts: a control unit, a power unit, and a communication unit. Both the control and power units are built with TI-MSP430FR5994 boards, while the communication unit is built with a NORDIC-NRF52840 board for BLE communication. The power and control units are interconnected with UART and the control unit talks to the BLE unit through SPI. The power unit controls a MOSFET to supply power to the control unit, emulating different intermittency behaviors in the power supply. With this setup, we tested the performance of \ours-sync under the charging time range $[40,120]$ slots where a slot lasts for 41 ms based on our hardware setup. The much higher slot length is due to the overheads incurred by the communication between multiple boards to emulate an intermittently-powered BF device. 
Nonetheless, our protocol still applies.  Figure~\ref{fig:syn:testbed} shows that the result matches that from the simulation. We can see that \ours-sync outperforms Find in most cases. Overall, \sync achieves efficient synchronization while ensuring reliability.

\begin{figure}[!t]
    \centering
    \includegraphics[width=0.48\textwidth]{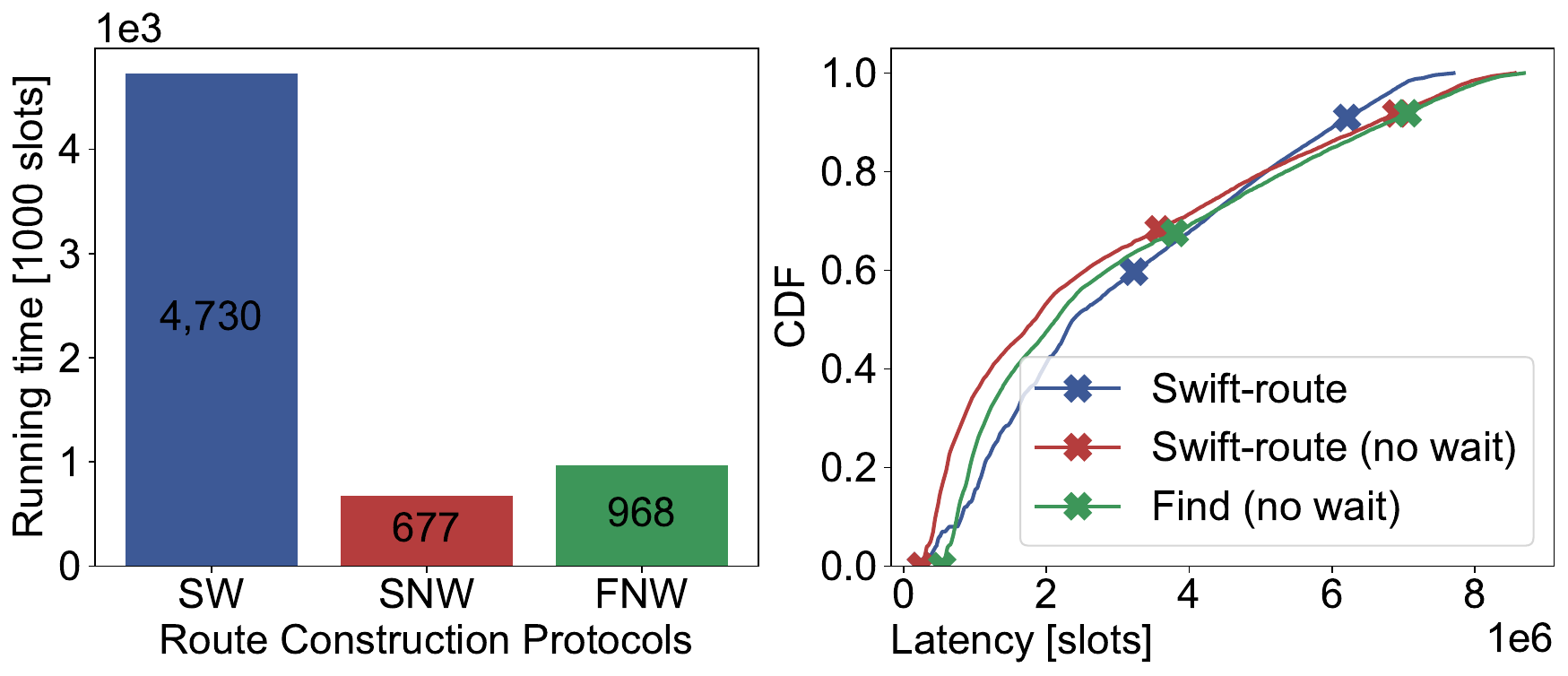}
    \caption{(left) route construction time (in slots) of \route with waiting (SW), \route without waiting (SNW), and Find without waiting (FNW), and (right) end-to-end message delivery time CDF of the three protocols.}
    \label{fig:eval:route}
\end{figure}

\subsection{Performance of \route}
\begin{figure*}[!t]
    \centering
    \subfloat[Square 45m $\times$ 45m]{
        \includegraphics[width=0.8\textwidth]{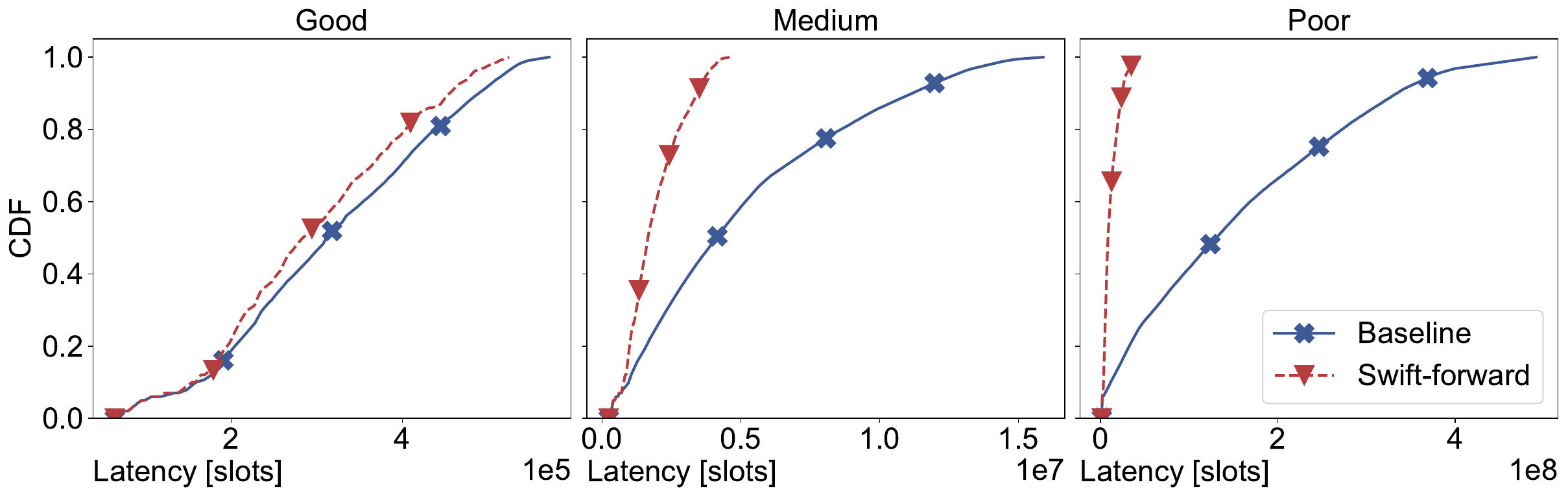}}
    \hfill
    \subfloat[Rectangle 25m $\times$ 100m]{
        \includegraphics[width=0.8\textwidth]{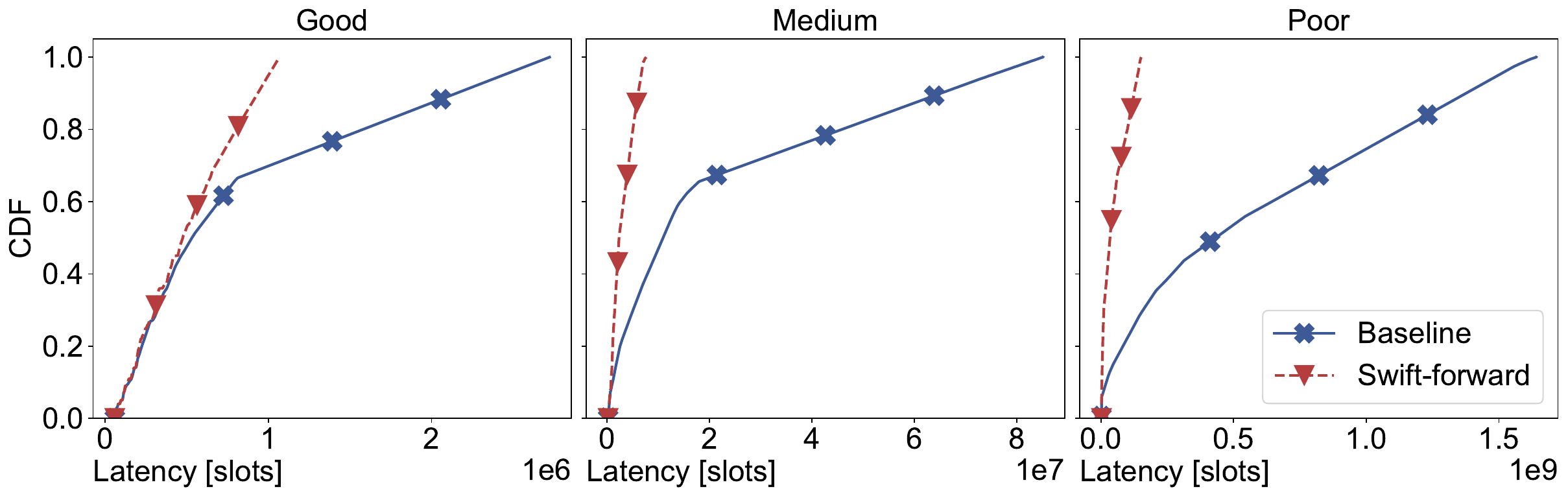}}
    \caption{Message forwarding time (in slots) comparison between \forward and a baseline without fast synchronization under varying energy conditions and two field scenarios.}
    \label{fig:eval:forward:overall}
\end{figure*}

\begin{figure*}[!t]
    \centering
\includegraphics[width=0.85\textwidth]{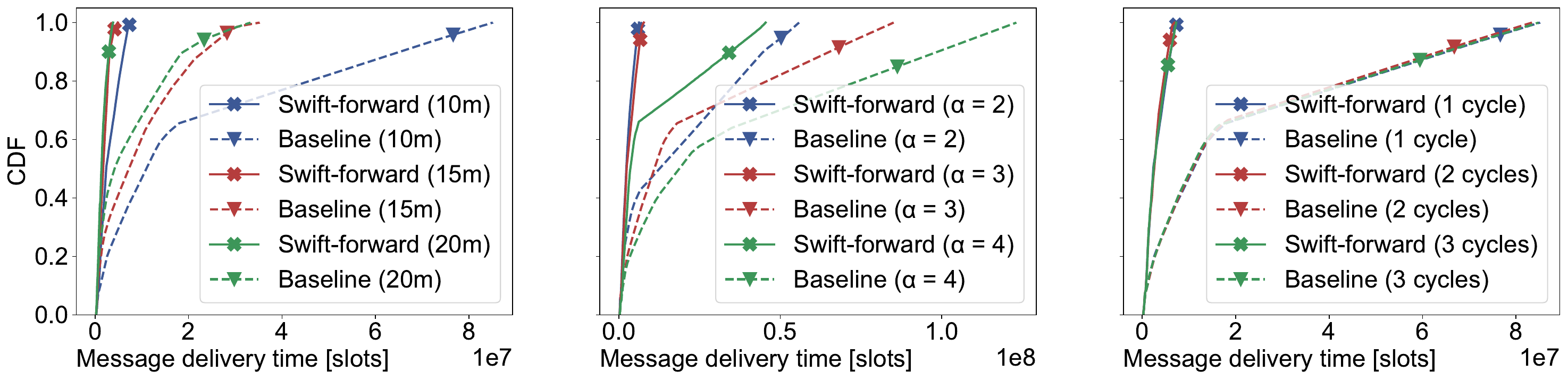}
    \caption{Performance of \ours under varying parameter setups: (left figure) the communication range, (middle figure) the maximum charging time difference, and (right figure) the message intensity.}
    \label{fig:eval:forward:impact}
\end{figure*}

We now evaluate the performance of \route. A major feature of \route is its optimality regarding the hop count of each node to the sink. This is achieved by the waiting mechanism in the design of the protocol. We thus compare with a variant of \route where this waiting mechanism is disabled, i.e., each node broadcasts its hop count immediately after they receive a message containing a smaller hop count. We call this variant \route (no-wait). We also compare with a baseline where the node synchronization is done with Find, without the waiting mechanism. We call this baseline Find (no-wait). Traditional routing protocols for WSNs are not considered due to their inability to handle intermittency.

Figure~\ref{fig:eval:route}(left) shows the time (in slots) required to complete the route construction protocol on the rectangle field under the medium energy condition. As expected, \route takes the longest time, due to the waiting mechanism in the protocol to ensure optimality. Considering that the protocol is only needed to run to establish the routes at the beginning, this time overhead is acceptable. While taking far less time to complete, the baselines produce suboptimal paths, leading to long delays in message delivery. Figure~\ref{fig:eval:route}(right) shows the message delivery time CDF when using the routes generated by each of the three approaches. We can see that \route can largely reduce the message delivery time for nodes with long routes compared with the baselines thanks to the shorter paths for these nodes.
However, this is achieved at the expense of higher message delivery time for nodes with short routes between them. One major reason is that the shortest paths produced by \ours form the shortest path tree, with all other links not on any shortest paths pruned out, leading to the worst path diversity. This is likely to create more congestion on some links compared with the baselines where different links may be used to spread out the load.

\subsection{Performance of \forward}
To evaluate \forward, we compare it with a baseline where we disable fast synchronization in \forward. In other words, the baseline employs the node synchronization mechanism every time a node communicates with its next hop. We evaluate all three energy conditions (good, medium, poor) and with the two areas (square and rectangle).

Figure~\ref{fig:eval:forward:overall} shows the CDF results for the end-to-end message delivery time (in slots) under the above setups. As we can observe, \forward is highly efficient in message forwarding, where the end-to-end message delivery time is reduced by up to an order of magnitude under all the considered conditions. Comparing the results for the different energy conditions (good, medium, and poor), we see a clear trend where the gap between \forward and the baseline becomes larger when the energy condition improves, suggesting that \forward can maintain its efficiency even in hostile environments.

\subsection{Impact of Parameters}
We also test \forward under different setups to examine the impact of system parameters on the message forwarding performance. We focus on the end-to-end message delivery time (in slots) as the performance metric and choose the same baseline (without fast synchronization) as we used in the previous experiment. 

Figure~\ref{fig:eval:forward:impact} shows the performance of \forward under varying parameter setups. The left figure depicts the \emph{impact of the communication range} where we consider each node can communicate directly with other nodes within a range of 10m, 15m, and 20m, respectively. We can see that \forward performs better than the baseline in all cases. Moreover, when the communication range increases, the end-to-end message delivery time decreases as expected, which is attributed to the reduced hop count for each node to reach the sink. The middle figure depicts the \emph{impact of the maximum charging time difference} (i.e., $\alpha$), where we show three cases with $\alpha=2$, $\alpha=3$, and $\alpha=4$, respectively. Similarly, \forward outperforms the baseline in all cases. With the maximum charging time difference increase, the end-to-end message delivery time generally increases. This can be explained by the fact that the node synchronization takes more time with a larger charging time difference between nodes. Finally, the right figure shows the \emph{impact of the message intensity} in the network. We consider three message intensity levels where each node sends a message every one, two, and three cycles, respectively. Again, \forward beats the baseline in all cases as expected. We do not observe much performance difference in these cases, which indicates that \forward is equally efficient under all the considered message intensities. 

The selection of these parameters highly depends on the specific application requirements and deployment conditions. In general, it is always beneficial to increase the communication range of BF devices, if the BF devices are equipped with capacitors large enough to sustain the communication in the working period of the device. The charging time range is usually dictated by the environment. However, it is generally beneficial to avoid deploying BF devices at locations with extremely bad energy availability to achieve an $\alpha$ as small as possible.

\section{Related works}
\label{sec:relatedwork}
\textbf{Communication for BF devices.} 
Communication between BF devices has gained significant attention, and the first challenge is addressing the node synchronization problem. Without shared knowledge, node synchronization is attempted by delaying the working period of one or both devices following specific distributions~\cite{2018-tecs-predic,2021-mass-comm,2021-nsdi-find}, hoping they align after several rounds. These approaches require fine-tuning based on system parameters and energy conditions and guarantee no success due to their randomized nature. Bonito proposes to learn the charging time with well-known distributions to enable continuous communication~\cite{2022-nsdi-bonito}. De Winkel et al. also present a Bluetooth stack for battery-free devices without changing the protocol specification~\cite{2022-mobisys-bluetooth}. While focusing on a single hop, these works provide a solid foundation for the routing problem we study in this paper.
In short, existing routing protocols fall short because of (1) unrealistic assumptions not holding in intermittent sensing systems, making them inapplicable, and (2) not accounting for the high synchronization overhead in intermittent sensing systems, making them inefficient.

\textbf{Low-power and lossy wireless networks.} 
Significant research has been conducted on low-power and lossy wireless networks (LLWNs), including time-slotted channel hopping~(TSCH)~\cite{2019-nca-tsch-survey}  
and routing protocols for low power and lossy networks~(RPL)~\cite{2019-sensors-llns}. Within this domain, concepts such as frame-slots and sub-slots have been introduced. Notably, sub-slots are subsets of frame-slots, typically spanning a 10-millisecond scale~\cite{2021-ieee-tsch, 2011-acm-ch, 2014-ICS-iscc, 2020-infocom-ost}. However, this does not imply that sensor nodes operate within this time frame. Instead, coordinators utilize these time slots to schedule communications, effectively reducing collision risks among sensor nodes. In contrast, intermittently-powered networks present unique challenges. Here, the active working time of BF devices is measured at a millisecond scale. Predicting their charging time is complex, and for this study, we consider a simplified scenario. Moreover, unlike traditional networks where a coordinator manages synchronization, in intermittently-powered sensing systems, BF devices independently handle their synchronization processes.

\textbf{Routing in sensing systems.}
Routing is an essential function of any sensing system and has been heavily studied in WSNs. Assuming continuous power supply by the onboard battery, traditional WSNs often leverage potent devices that are not available for intermittently-powered BF sensing systems~\cite{2020-algo-wsnsurvey, 2021-csur-sync}. Meanwhile, energy-harvesting WSNs employ energy harvesters to recharge the onboard battery or super-capacitors and adopt active duty cycling to control the device state intentionally~\cite{2018-tsn-ehwsn,2009-icnp-esc, 2011-mobihoc-esa ,2016-infocom-panda,2018-infocom-ehwsn,2014-mobihoc-eh,2019-sigcomm-nd, 2022-tmc-das, 2011-mobihoc-esa, 2016-infocom-ehdataprocess}.
Both traditional WSNs and energy-harvesting WSNs assume more powerful energy storage than intermittently-powered sensing systems.

\textbf{Linear congruential generator.} 
The Linear Congruential Generator (LCG) is a widely recognized method for generating random numbers and has been a staple in the realm of computational algorithms. LCG plays a crucial role in security communications and is frequently employed in wireless sensor networks for random number generation. To our knowledge, LCG has been utilized for encrypting data in several studies, as indicated in references~\cite{2023-mtp-smartFa, 2008-cc-rfid, 2020-sensors-smartAg}. Additionally, EDGF~\cite{2021-cc-edfg} has leveraged LCG for generating datasets in simulation environments. Notably, Swift represents the pioneering work in using a customized LCG for synchronizing BF devices, representing a significant advancement in this field.

\section{Discussion}
\label{sec:discussion}
\subsection{Comparative Analysis with EH-WSNs}
Traditional wireless sensor networks (WSNs) typically consist of sensor nodes that are powered by chemical batteries~\cite{2020-algo-wsnsurvey}. The routing problem has been studied extensively, but the goal is mainly to pursue QoS and the whole network life through balancing the energy consumption of nodes. In light of the maintenance challenge, researchers have explored the idea of energy-harvesting WSNs (EH-WSNs), where the sensor node is powered by a rechargeable battery or a supercapacitor~\cite{2018-tosn-ehwsn}. With this difference, the goal of routing optimization turns to the QoS and the uptime of the network. The uptime of the network depends on the energy availability which is dictated by the energy harvesting rate and the battery/capacitor size. Most of the existing works leverage prediction models to assess the energy availability and perform energy management, based on which the routing schemes are developed~\cite{2018-tosn-ehwsn,2015-commag-ehwsn,2019-ipsn-ehs}. Intermittently-powered battery-free sensing systems are fundamentally different from EH-WSNs since the buffered energy in the capacitor will be depleted in just one wake-up of the BF device, lasting only one or a few milliseconds. Also, the charging time is much shorter due to the limited capacity of the tiny capacitor.
Therefore, there is no energy management needed across device wake-ups unlike in EH-WSNs. Our routing scheme is specially tailored for intermittently powered BF sensing systems, with the goal of minimizing the number of wake-ups needed for device synchronization and message passing.

\subsection{Scalability}
Scaling intermittently powered sensing systems remains a challenge in general due to the high level of dynamism and unpredictability the environment may impose on the system. While taking a significant step, \ours is still limited by the assumption it builds on: the charging time of a BF device stays stable at a small time scale (e.g., a few minutes). This assumption typically holds for deployment in indoor environments (e.g., a greenhouse or warehouse) where the environment is highly controlled and more stable than outdoor wild environments. For applications in a greenhouse or warehouse, we think that the size we are targeting (i.e., 100 BF devices used in the evaluation) is reasonable. For indoor environments on even larger scales, Swift can still work, albeit the performance (w.r.t. node synchronization time and end-to-end message delivery time) may degrade inevitably due to the increase of the hop count in the topology. One mitigation strategy is to increase the communication range of BF devices to reduce the hop count. However, this will require equipping these BF devices with larger capacitors, thus negatively affecting the form factor of the device as well as the deployment cost. For more hostile conditions where the charging time of BF devices varies dramatically in an unpredictable fashion, Swift will unlikely work as expected, and future research work is needed.

\subsection{Energy Efficiency}
Energy efficiency is an important topic for embedded systems like the sensing system we consider in this paper. Nevertheless, we would like to point out that there are fundamental differences between intermittently powered battery-free devices and traditional battery-powered sensor devices. With traditional battery-powered sensor devices, the energy saved from an efficient protocol design can be used for other tasks of the system. However, intermittently powered battery-free devices wake up when enough energy has been collected and stored in the onboard capacitor with an artificial delay imposed by the protocol. In every device wake-up, the device only performs a fixed communication round. As a result, the energy efficiency of the protocol can mainly be characterized by the number of device wake-ups needed for synchronization and message passing.
This means that the performance metrics we focus on in the evaluation, e.g., the synchronization time, the route construction latency, and the end-to-end message delivery time, already capture the energy efficiency aspect of the protocol. In short, the faster the protocol is, the more energy efficient the protocol is, which means that fewer charging cycles will be spent on the protocol and more cycles on other tasks.

\section{Conclusions}
\label{sec:conclusions}
This paper studies routing with intermittency and proposes \ours---a routing scheme tailored for intermittently-powered BF sensing systems. \ours features three new designs: a node synchronization protocol \sync enabling the communication between BF devices with success guarantee, a message forwarding protocol \forward that avoids repeated synchronization between nodes on the route to achieve low latency, and a route construction mechanism \route that produces a route from each node to the sink node with the least hop count. Our evaluation based on an implementation in OMNeT++ and real-world setups shows that \ours achieves efficient routing and outperforms the state-of-the-art to a large extent. In future work, we aim to explore scenarios involving higher temporal variations in charging time.

\bibliographystyle{IEEEtran}
\bibliography{refs}

\begin{thebibliography}{10}
\providecommand{\url}[1]{#1}
\csname url@samestyle\endcsname
\providecommand{\newblock}{\relax}
\providecommand{\bibinfo}[2]{#2}
\providecommand{\BIBentrySTDinterwordspacing}{\spaceskip=0pt\relax}
\providecommand{\BIBentryALTinterwordstretchfactor}{4}
\providecommand{\BIBentryALTinterwordspacing}{\spaceskip=\fontdimen2\font plus
\BIBentryALTinterwordstretchfactor\fontdimen3\font minus
  \fontdimen4\font\relax}
\providecommand{\BIBforeignlanguage}[2]{{%
\expandafter\ifx\csname l@#1\endcsname\relax
\typeout{** WARNING: IEEEtran.bst: No hyphenation pattern has been}%
\typeout{** loaded for the language `#1'. Using the pattern for}%
\typeout{** the default language instead.}%
\else
\language=\csname l@#1\endcsname
\fi
#2}}
\providecommand{\BIBdecl}{\relax}
\BIBdecl

\bibitem{2017-sensys-futureic}
\BIBentryALTinterwordspacing
J.~D. Hester and J.~Sorber, ``The future of sensing is batteryless,
  intermittent, and awesome,'' in \emph{Proceedings of the 15th {ACM}
  Conference on Embedded Network Sensor Systems, SenSys 2017, Delft,
  Netherlands, November 06-08, 2017}, M.~R. Eskicioglu, Ed.\hskip 1em plus
  0.5em minus 0.4em\relax {ACM}, 2017, pp. 21:1--21:6. [Online]. Available:
  \url{https://doi.org/10.1145/3131672.3131699}
\BIBentrySTDinterwordspacing

\bibitem{2017-snapl-survey}
\BIBentryALTinterwordspacing
B.~Lucia, V.~Balaji, A.~Colin, K.~Maeng, and E.~Ruppel, ``Intermittent
  computing: Challenges and opportunities,'' in \emph{2nd Summit on Advances in
  Programming Languages, {SNAPL} 2017, May 7-10, 2017, Asilomar, CA, {USA}},
  ser. LIPIcs, B.~S. Lerner, R.~Bod{\'{\i}}k, and S.~Krishnamurthi, Eds.,
  vol.~71.\hskip 1em plus 0.5em minus 0.4em\relax Schloss Dagstuhl -
  Leibniz-Zentrum f{\"{u}}r Informatik, 2017, pp. 8:1--8:14. [Online].
  Available: \url{https://doi.org/10.4230/LIPIcs.SNAPL.2017.8}
\BIBentrySTDinterwordspacing

\bibitem{2022-tecs-camaroptera}
\BIBentryALTinterwordspacing
H.~Desai, M.~Nardello, D.~Brunelli, and B.~Lucia, ``Camaroptera: {A} long-range
  image sensor with local inference for remote sensing applications,''
  \emph{{ACM} Trans. Embed. Comput. Syst.}, vol.~21, no.~3, pp. 32:1--32:25,
  2022. [Online]. Available: \url{https://doi.org/10.1145/3510850}
\BIBentrySTDinterwordspacing

\bibitem{2018-mobicom-eyetracker}
\BIBentryALTinterwordspacing
T.~Li and X.~Zhou, ``Battery-free eye tracker on glasses,'' in
  \emph{Proceedings of the 10th on Wireless of the Students, by the Students,
  and for the Students Workshop, S3@MobiCom 2018, New Delhi, India, November 2,
  2018}, S.~Pradhan and S.~K. Saha, Eds.\hskip 1em plus 0.5em minus 0.4em\relax
  {ACM}, 2018, pp. 27--29. [Online]. Available:
  \url{https://doi.org/10.1145/3264877.3264885}
\BIBentrySTDinterwordspacing

\bibitem{2021-weee-cic}
\BIBentryALTinterwordspacing
G.~Liu and L.~Wang, ``Self-sustainable cyber-physical systems with
  collaborative intermittent computing,'' in \emph{e-Energy '21: The Twelfth
  {ACM} International Conference on Future Energy Systems, Virtual Event,
  Torino, Italy, 28 June - 2 July, 2021}, H.~de~Meer and M.~Meo, Eds.\hskip 1em
  plus 0.5em minus 0.4em\relax {ACM}, 2021, pp. 316--321. [Online]. Available:
  \url{https://doi.org/10.1145/3447555.3465324}
\BIBentrySTDinterwordspacing

\bibitem{2020-ipsn-cis}
\BIBentryALTinterwordspacing
A.~Y. Majid, P.~Schilder, and K.~Langendoen, ``Continuous sensing on
  intermittent power,'' in \emph{19th {ACM/IEEE} International Conference on
  Information Processing in Sensor Networks, {IPSN} 2020, Sydney, Australia,
  April 21-24, 2020}.\hskip 1em plus 0.5em minus 0.4em\relax {IEEE}, 2020, pp.
  181--192. [Online]. Available:
  \url{https://doi.org/10.1109/IPSN48710.2020.00-36}
\BIBentrySTDinterwordspacing

\bibitem{2020-sensys-bfsensing}
\BIBentryALTinterwordspacing
M.~Afanasov, N.~A. Bhatti, D.~Campagna, G.~Caslini, F.~M. Centonze, K.~Dolui,
  A.~Maioli, E.~Barone, M.~H. Alizai, J.~H. Siddiqui, and L.~Mottola,
  ``Battery-less zero-maintenance embedded sensing at the mithr{\ae}um of
  circus maximus,'' in \emph{SenSys '20: The 18th {ACM} Conference on Embedded
  Networked Sensor Systems, Virtual Event, Japan, November 16-19, 2020},
  J.~Nakazawa and P.~Huang, Eds.\hskip 1em plus 0.5em minus 0.4em\relax {ACM},
  2020, pp. 368--381. [Online]. Available:
  \url{https://doi.org/10.1145/3384419.3430722}
\BIBentrySTDinterwordspacing

\bibitem{2023-ipsn-riotee}
\BIBentryALTinterwordspacing
K.~Geissdoerfer, I.~Splitt, and M.~Zimmerling, ``Demo abstract: Building
  battery-free devices with riotee,'' in \emph{The 22nd International
  Conference on Information Processing in Sensor Networks, {IPSN} 2023, San
  Antonio, TX, USA, May 9-12, 2023}.\hskip 1em plus 0.5em minus 0.4em\relax
  {ACM}, 2023, pp. 354--355. [Online]. Available:
  \url{https://doi.org/10.1145/3583120.3589808}
\BIBentrySTDinterwordspacing

\bibitem{2017-imwut-bfphone}
\BIBentryALTinterwordspacing
V.~Talla, B.~Kellogg, S.~Gollakota, and J.~R. Smith, ``Battery-free
  cellphone,'' \emph{Proc. {ACM} Interact. Mob. Wearable Ubiquitous Technol.},
  vol.~1, no.~2, pp. 25:1--25:20, 2017. [Online]. Available:
  \url{https://doi.org/10.1145/3090090}
\BIBentrySTDinterwordspacing

\bibitem{2020-imwut-bfgame}
\BIBentryALTinterwordspacing
J.~de~Winkel, V.~Kortbeek, J.~D. Hester, and P.~Pawelczak, ``Battery-free game
  boy: Sustainable interactive devices,'' \emph{GetMobile Mob. Comput.
  Commun.}, vol.~25, no.~2, pp. 22--26, 2021. [Online]. Available:
  \url{https://doi.org/10.1145/3486880.3486888}
\BIBentrySTDinterwordspacing

\bibitem{2018-asplos-energystore}
\BIBentryALTinterwordspacing
A.~Colin, E.~Ruppel, and B.~Lucia, ``A reconfigurable energy storage
  architecture for energy-harvesting devices,'' in \emph{Proceedings of the
  Twenty-Third International Conference on Architectural Support for
  Programming Languages and Operating Systems, {ASPLOS} 2018, Williamsburg, VA,
  USA, March 24-28, 2018}, X.~Shen, J.~Tuck, R.~Bianchini, and V.~Sarkar,
  Eds.\hskip 1em plus 0.5em minus 0.4em\relax {ACM}, 2018, pp. 767--781.
  [Online]. Available: \url{https://doi.org/10.1145/3173162.3173210}
\BIBentrySTDinterwordspacing

\bibitem{2016-osdi-ratchet}
\BIBentryALTinterwordspacing
J.~van~der Woude and M.~Hicks, ``Intermittent computation without hardware
  support or programmer intervention,'' in \emph{12th {USENIX} Symposium on
  Operating Systems Design and Implementation, {OSDI} 2016, Savannah, GA, USA,
  November 2-4, 2016}, K.~Keeton and T.~Roscoe, Eds.\hskip 1em plus 0.5em minus
  0.4em\relax {USENIX} Association, 2016, pp. 17--32. [Online]. Available:
  \url{https://www.usenix.org/conference/osdi16/technical-sessions/presentation/vanderwoude}
\BIBentrySTDinterwordspacing

\bibitem{2016-oopsla-chain}
\BIBentryALTinterwordspacing
A.~Colin and B.~Lucia, ``Chain: tasks and channels for reliable intermittent
  programs,'' in \emph{Proceedings of the 2016 {ACM} {SIGPLAN} International
  Conference on Object-Oriented Programming, Systems, Languages, and
  Applications, {OOPSLA} 2016, part of {SPLASH} 2016, Amsterdam, The
  Netherlands, October 30 - November 4, 2016}, E.~Visser and Y.~Smaragdakis,
  Eds.\hskip 1em plus 0.5em minus 0.4em\relax {ACM}, 2016, pp. 514--530.
  [Online]. Available: \url{https://doi.org/10.1145/2983990.2983995}
\BIBentrySTDinterwordspacing

\bibitem{2018-sensys-ink}
\BIBentryALTinterwordspacing
K.~S. Yildirim, A.~Y. Majid, D.~Patoukas, K.~Schaper, P.~Pawelczak, and J.~D.
  Hester, ``Ink: Reactive kernel for tiny batteryless sensors,'' in
  \emph{Proceedings of the 16th {ACM} Conference on Embedded Networked Sensor
  Systems, SenSys 2018, Shenzhen, China, November 4-7, 2018}, G.~S.
  Ramachandran and B.~Krishnamachari, Eds.\hskip 1em plus 0.5em minus
  0.4em\relax {ACM}, 2018, pp. 41--53. [Online]. Available:
  \url{https://doi.org/10.1145/3274783.3274837}
\BIBentrySTDinterwordspacing

\bibitem{2020-sensys-state}
\BIBentryALTinterwordspacing
H.~A. Asad, E.~H. Wouters, N.~A. Bhatti, L.~Mottola, and T.~Voigt, ``On
  securing persistent state in intermittent computing,'' in \emph{Proceedings
  of the 8th International Workshop on Energy Harvesting {\&} Energy-Neutral
  Sensing Systems, ENSsys@SenSys 2020, Virtual Event, Japan, November 16,
  2020}.\hskip 1em plus 0.5em minus 0.4em\relax {ACM}, 2020, pp. 8--14.
  [Online]. Available: \url{https://doi.org/10.1145/3417308.3430267}
\BIBentrySTDinterwordspacing

\bibitem{2020-pacmpl-foundation}
\BIBentryALTinterwordspacing
M.~Surbatovich, B.~Lucia, and L.~Jia, ``Towards a formal foundation of
  intermittent computing,'' \emph{Proc. {ACM} Program. Lang.}, vol.~4, no.
  {OOPSLA}, pp. 163:1--163:31, 2020. [Online]. Available:
  \url{https://doi.org/10.1145/3428231}
\BIBentrySTDinterwordspacing

\bibitem{2020-pldi-scheduling}
\BIBentryALTinterwordspacing
K.~Maeng and B.~Lucia, ``Adaptive low-overhead scheduling for periodic and
  reactive intermittent execution,'' in \emph{Proceedings of the 41st {ACM}
  {SIGPLAN} International Conference on Programming Language Design and
  Implementation, {PLDI} 2020, London, UK, June 15-20, 2020}, A.~F. Donaldson
  and E.~Torlak, Eds.\hskip 1em plus 0.5em minus 0.4em\relax {ACM}, 2020, pp.
  1005--1021. [Online]. Available:
  \url{https://doi.org/10.1145/3385412.3385998}
\BIBentrySTDinterwordspacing

\bibitem{2020-tecs-checkpoint}
\BIBentryALTinterwordspacing
S.~Ahmed, N.~A. Bhatti, M.~H. Alizai, J.~H. Siddiqui, and L.~Mottola, ``Fast
  and energy-efficient state checkpointing for intermittent computing,''
  \emph{{ACM} Trans. Embed. Comput. Syst.}, vol.~19, no.~6, pp. 45:1--45:27,
  2020. [Online]. Available: \url{https://doi.org/10.1145/3391903}
\BIBentrySTDinterwordspacing

\bibitem{2020-asplos-tics}
\BIBentryALTinterwordspacing
V.~Kortbeek, K.~S. Yildirim, A.~Bakar, J.~Sorber, J.~D. Hester, and
  P.~Pawelczak, ``Time-sensitive intermittent computing meets legacy
  software,'' in \emph{{ASPLOS} '20: Architectural Support for Programming
  Languages and Operating Systems, Lausanne, Switzerland, March 16-20, 2020},
  J.~R. Larus, L.~Ceze, and K.~Strauss, Eds.\hskip 1em plus 0.5em minus
  0.4em\relax {ACM}, 2020, pp. 85--99. [Online]. Available:
  \url{https://doi.org/10.1145/3373376.3378476}
\BIBentrySTDinterwordspacing

\bibitem{2020-imwut-bfree}
\BIBentryALTinterwordspacing
V.~Kortbeek, A.~Bakar, S.~Cruz, K.~S. Yildirim, P.~Pawelczak, and J.~D. Hester,
  ``Bfree: Enabling battery-free sensor prototyping with python,'' \emph{Proc.
  {ACM} Interact. Mob. Wearable Ubiquitous Technol.}, vol.~4, no.~4, pp.
  135:1--135:39, 2020. [Online]. Available:
  \url{https://doi.org/10.1145/3432191}
\BIBentrySTDinterwordspacing

\bibitem{2022-pldi-wario}
\BIBentryALTinterwordspacing
V.~Kortbeek, S.~Ghosh, J.~D. Hester, S.~Campanoni, and P.~Pawelczak, ``Wario:
  efficient code generation for intermittent computing,'' in \emph{{PLDI} '22:
  43rd {ACM} {SIGPLAN} International Conference on Programming Language Design
  and Implementation, San Diego, CA, USA, June 13 - 17, 2022}, R.~Jhala and
  I.~Dillig, Eds.\hskip 1em plus 0.5em minus 0.4em\relax {ACM}, 2022, pp.
  777--791. [Online]. Available: \url{https://doi.org/10.1145/3519939.3523454}
\BIBentrySTDinterwordspacing

\bibitem{2022-mobisys-bluetooth}
\BIBentryALTinterwordspacing
J.~de~Winkel, H.~Tang, and P.~Pawelczak, ``Intermittently-powered bluetooth
  that works,'' in \emph{MobiSys '22: The 20th Annual International Conference
  on Mobile Systems, Applications and Services, Portland, Oregon, 27 June 2022
  - 1 July 2022}, N.~Bulusu, E.~Aryafar, A.~Balasubramanian, and J.~Song,
  Eds.\hskip 1em plus 0.5em minus 0.4em\relax {ACM}, 2022, pp. 287--301.
  [Online]. Available: \url{https://doi.org/10.1145/3498361.3538934}
\BIBentrySTDinterwordspacing

\bibitem{2021-nsdi-find}
\BIBentryALTinterwordspacing
K.~Geissdoerfer and M.~Zimmerling, ``Bootstrapping battery-free wireless
  networks: Efficient neighbor discovery and synchronization in the face of
  intermittency,'' in \emph{18th {USENIX} Symposium on Networked Systems Design
  and Implementation, {NSDI} 2021, April 12-14, 2021}, J.~Mickens and
  R.~Teixeira, Eds.\hskip 1em plus 0.5em minus 0.4em\relax {USENIX}
  Association, 2021, pp. 439--455. [Online]. Available:
  \url{https://www.usenix.org/conference/nsdi21/presentation/geissdoerfer}
\BIBentrySTDinterwordspacing

\bibitem{2022-nsdi-bonito}
\BIBentryALTinterwordspacing
------, ``Learning to communicate effectively between battery-free devices,''
  in \emph{19th {USENIX} Symposium on Networked Systems Design and
  Implementation, {NSDI} 2022, Renton, WA, USA, April 4-6, 2022},
  A.~Phanishayee and V.~Sekar, Eds.\hskip 1em plus 0.5em minus 0.4em\relax
  {USENIX} Association, 2022, pp. 419--435. [Online]. Available:
  \url{https://www.usenix.org/conference/nsdi22/presentation/geissdoerfer}
\BIBentrySTDinterwordspacing

\bibitem{2019-sigcomm-nd}
\BIBentryALTinterwordspacing
P.~H. Kindt and S.~Chakraborty, ``On optimal neighbor discovery,'' in
  \emph{Proceedings of the {ACM} Special Interest Group on Data Communication,
  {SIGCOMM} 2019, Beijing, China, August 19-23, 2019}, J.~Wu and W.~Hall,
  Eds.\hskip 1em plus 0.5em minus 0.4em\relax {ACM}, 2019, pp. 441--457.
  [Online]. Available: \url{https://doi.org/10.1145/3341302.3342067}
\BIBentrySTDinterwordspacing

\bibitem{2016-infocom-panda}
\BIBentryALTinterwordspacing
R.~Margolies, G.~Grebla, T.~Chen, D.~Rubenstein, and G.~Zussman, ``Panda:
  Neighbor discovery on a power harvesting budget,'' vol.~34, no.~12, 2016, pp.
  3606--3619. [Online]. Available:
  \url{https://doi.org/10.1109/JSAC.2016.2611984}
\BIBentrySTDinterwordspacing

\bibitem{2016-rtss-eh}
\BIBentryALTinterwordspacing
Z.~Dong, Y.~Gu, J.~Chen, S.~Tang, T.~He, and C.~Liu, ``Enabling predictable
  wireless data collection in severe energy harvesting environments,'' in
  \emph{2016 {IEEE} Real-Time Systems Symposium, {RTSS} 2016, Porto, Portugal,
  November 29 - December 2, 2016}.\hskip 1em plus 0.5em minus 0.4em\relax
  {IEEE} Computer Society, 2016, pp. 157--166. [Online]. Available:
  \url{https://doi.org/10.1109/RTSS.2016.024}
\BIBentrySTDinterwordspacing

\bibitem{2012-csur-wsn}
\BIBentryALTinterwordspacing
R.~A. Uthra and S.~V.~K. Raja, ``Qos routing in wireless sensor networks - a
  survey,'' \emph{{ACM} Comput. Surv.}, vol.~45, no.~1, pp. 9:1--9:12, 2012.
  [Online]. Available: \url{https://doi.org/10.1145/2379776.2379785}
\BIBentrySTDinterwordspacing

\bibitem{2018-tsn-ehwsn}
\BIBentryALTinterwordspacing
K.~S. Adu{-}Manu, N.~H. Adam, C.~Tapparello, H.~Ayatollahi, and W.~B.
  Heinzelman, ``Energy-harvesting wireless sensor networks (eh-wsns): {A}
  review,'' \emph{{ACM} Trans. Sens. Networks}, vol.~14, no.~2, pp.
  10:1--10:50, 2018. [Online]. Available: \url{https://doi.org/10.1145/3183338}
\BIBentrySTDinterwordspacing

\bibitem{2021-tgcn-renew}
\BIBentryALTinterwordspacing
R.~V. Prasad, V.~S. Rao, C.~Sarkar, and I.~G. Niemegeers, ``Renew: {A}
  practical module for reliable routing in networks of energy-harvesting
  wireless sensors,'' \emph{{IEEE} Trans. Green Commun. Netw.}, vol.~5, no.~3,
  pp. 1558--1569, 2021. [Online]. Available:
  \url{https://doi.org/10.1109/TGCN.2021.3094771}
\BIBentrySTDinterwordspacing

\bibitem{2014-mobihoc-eh}
\BIBentryALTinterwordspacing
J.~Marasevic, C.~Stein, and G.~Zussman, ``Max-min fair rate allocation and
  routing in energy harvesting networks: Algorithmic analysis,'' vol.~78,
  no.~2, 2017, pp. 521--557. [Online]. Available:
  \url{https://doi.org/10.1007/s00453-016-0171-6}
\BIBentrySTDinterwordspacing

\bibitem{2018-infocom-ehwsn}
\BIBentryALTinterwordspacing
Q.~Chen, H.~Gao, Z.~Cai, L.~Cheng, and J.~Li, ``Energy-collision aware data
  aggregation scheduling for energy harvesting sensor networks,'' in \emph{2018
  {IEEE} Conference on Computer Communications, {INFOCOM} 2018, Honolulu, HI,
  USA, April 16-19, 2018}.\hskip 1em plus 0.5em minus 0.4em\relax {IEEE}, 2018,
  pp. 117--125. [Online]. Available:
  \url{https://doi.org/10.1109/INFOCOM.2018.8486366}
\BIBentrySTDinterwordspacing

\bibitem{2019-asplos-genesis}
\BIBentryALTinterwordspacing
G.~Gobieski, B.~Lucia, and N.~Beckmann, ``Intelligence beyond the edge:
  Inference on intermittent embedded systems,'' in \emph{Proceedings of the
  Twenty-Fourth International Conference on Architectural Support for
  Programming Languages and Operating Systems, {ASPLOS} 2019, Providence, RI,
  USA, April 13-17, 2019}, I.~Bahar, M.~Herlihy, E.~Witchel, and A.~R. Lebeck,
  Eds.\hskip 1em plus 0.5em minus 0.4em\relax {ACM}, 2019, pp. 199--213.
  [Online]. Available: \url{https://doi.org/10.1145/3297858.3304011}
\BIBentrySTDinterwordspacing

\bibitem{2022-tecs-energyconsumption}
\BIBentryALTinterwordspacing
S.~Ahmed, M.~Nawaz, A.~Bakar, N.~A. Bhatti, M.~H. Alizai, J.~H. Siddiqui, and
  L.~Mottola, ``Demystifying energy consumption dynamics in transiently powered
  computers,'' \emph{{ACM} Trans. Embed. Comput. Syst.}, vol.~19, no.~6, pp.
  47:1--47:25, 2020. [Online]. Available: \url{https://doi.org/10.1145/3391893}
\BIBentrySTDinterwordspacing

\bibitem{2019-pldi-samoyed}
\BIBentryALTinterwordspacing
K.~Maeng and B.~Lucia, ``Supporting peripherals in intermittent systems with
  just-in-time checkpoints,'' in \emph{Proceedings of the 40th {ACM} {SIGPLAN}
  Conference on Programming Language Design and Implementation, {PLDI} 2019,
  Phoenix, AZ, USA, June 22-26, 2019}, K.~S. McKinley and K.~Fisher, Eds.\hskip
  1em plus 0.5em minus 0.4em\relax {ACM}, 2019, pp. 1101--1116. [Online].
  Available: \url{https://doi.org/10.1145/3314221.3314613}
\BIBentrySTDinterwordspacing

\bibitem{2020-asplos-chrt}
\BIBentryALTinterwordspacing
J.~de~Winkel, C.~D. Donne, K.~S. Yildirim, P.~Pawelczak, and J.~D. Hester,
  ``Reliable timekeeping for intermittent computing,'' in \emph{{ASPLOS} '20:
  Architectural Support for Programming Languages and Operating Systems,
  Lausanne, Switzerland, March 16-20, 2020}, J.~R. Larus, L.~Ceze, and
  K.~Strauss, Eds.\hskip 1em plus 0.5em minus 0.4em\relax {ACM}, 2020, pp.
  53--67. [Online]. Available: \url{https://doi.org/10.1145/3373376.3378464}
\BIBentrySTDinterwordspacing

\bibitem{sensys-2017-task}
\BIBentryALTinterwordspacing
J.~D. Hester, K.~M. Storer, and J.~Sorber, ``Timely execution on intermittently
  powered batteryless sensors,'' in \emph{Proceedings of the 15th {ACM}
  Conference on Embedded Network Sensor Systems, SenSys 2017, Delft,
  Netherlands, November 06-08, 2017}, M.~R. Eskicioglu, Ed.\hskip 1em plus
  0.5em minus 0.4em\relax {ACM}, 2017, pp. 17:1--17:13. [Online]. Available:
  \url{https://doi.org/10.1145/3131672.3131673}
\BIBentrySTDinterwordspacing

\bibitem{2021-specification-BLE5}
\BIBentryALTinterwordspacing
M.~Woolley, ``Bluetooth® core specification version 5.0 feature
  enhancements,'' 2021. [Online]. Available:
  \url{https://www.bluetooth.com/bluetooth-resources/bluetooth-5-go-faster-go-further}
\BIBentrySTDinterwordspacing

\bibitem{1997-art-lcg}
D.~E. Knuth, \emph{The Art of Computer Programming, Volume 2: Seminumerical
  Algorithms}, 3rd~ed.\hskip 1em plus 0.5em minus 0.4em\relax Addison-Wesley,
  1997.

\bibitem{1962-siam-generator}
\BIBentryALTinterwordspacing
J.~L. Allard, A.~R. Dobell, and T.~E. Hull, ``Mixed congruential random number
  generators for decimal machines,'' \emph{J. {ACM}}, vol.~10, no.~2, pp.
  131--141, 1963. [Online]. Available:
  \url{https://doi.org/10.1145/321160.321163}
\BIBentrySTDinterwordspacing

\bibitem{2018-tecs-predic}
\BIBentryALTinterwordspacing
M.~M.~I. Rajib and A.~Nasipuri, ``Predictive retransmissions for intermittently
  connected sensor networks with transmission diversity,'' \emph{{ACM} Trans.
  Embed. Comput. Syst.}, vol.~17, no.~1, pp. 12:1--12:25, 2018. [Online].
  Available: \url{https://doi.org/10.1145/3092947}
\BIBentrySTDinterwordspacing

\bibitem{2021-mass-comm}
\BIBentryALTinterwordspacing
V.~Deep, M.~L. Wymore, A.~A. Aurandt, V.~Narayanan, S.~Fu, H.~Duwe, and
  D.~Qiao, ``Experimental study of lifecycle management protocols for
  batteryless intermittent communication,'' in \emph{{IEEE} 18th International
  Conference on Mobile Ad Hoc and Smart Systems, {MASS} 2021, Denver, CO, USA,
  October 4-7, 2021}.\hskip 1em plus 0.5em minus 0.4em\relax {IEEE}, 2021, pp.
  355--363. [Online]. Available:
  \url{https://doi.org/10.1109/MASS52906.2021.00052}
\BIBentrySTDinterwordspacing

\bibitem{2019-nca-tsch-survey}
\BIBentryALTinterwordspacing
S.~Kharb and A.~Singhrova, ``A survey on network formation and scheduling
  algorithms for time slotted channel hopping in industrial networks,''
  \emph{J. Netw. Comput. Appl.}, vol. 126, pp. 59--87, 2019. [Online].
  Available: \url{https://doi.org/10.1016/j.jnca.2018.11.004}
\BIBentrySTDinterwordspacing

\bibitem{2019-sensors-llns}
\BIBentryALTinterwordspacing
J.~V.~V. Sobral, J.~J. P.~C. Rodrigues, R.~A.~L. Rab{\^{e}}lo, J.~Al{-}Muhtadi,
  and V.~Korotaev, ``Routing protocols for low power and lossy networks in
  internet of things applications,'' \emph{Sensors}, vol.~19, no.~9, p. 2144,
  2019. [Online]. Available: \url{https://doi.org/10.3390/s19092144}
\BIBentrySTDinterwordspacing

\bibitem{2021-ieee-tsch}
\BIBentryALTinterwordspacing
N.~Choudhury, M.~M. Nasralla, P.~Gupta, and I.~U. Rehman, ``Centralized graph
  based {TSCH} scheduling for iot network applications,'' in \emph{2021 {IEEE}
  Intl Conf on Parallel {\&} Distributed Processing with Applications, Big Data
  {\&} Cloud Computing, Sustainable Computing {\&} Communications, Social
  Computing {\&} Networking (ISPA/BDCloud/SocialCom/SustainCom), New York City,
  NY, USA, September 30 - Oct. 3, 2021}.\hskip 1em plus 0.5em minus 0.4em\relax
  {IEEE}, 2021, pp. 1639--1644. [Online]. Available:
  \url{https://doi.org/10.1109/ISPA-BDCloud-SocialCom-SustainCom52081.2021.00219}
\BIBentrySTDinterwordspacing

\bibitem{2011-acm-ch}
\BIBentryALTinterwordspacing
L.~Tang, Y.~Sun, O.~Gurewitz, and D.~B. Johnson, ``{EM-MAC:} a dynamic
  multichannel energy-efficient {MAC} protocol for wireless sensor networks,''
  in \emph{Proceedings of the 12th {ACM} Interational Symposium on Mobile Ad
  Hoc Networking and Computing, MobiHoc 2011, Paris, France, May 16-20,
  2011}.\hskip 1em plus 0.5em minus 0.4em\relax {ACM}, 2011, p.~23. [Online].
  Available: \url{https://doi.org/10.1145/2107502.2107533}
\BIBentrySTDinterwordspacing

\bibitem{2014-ICS-iscc}
\BIBentryALTinterwordspacing
D.~D. Guglielmo, A.~Seghetti, G.~Anastasi, and M.~Conti, ``A performance
  analysis of the network formation process in {IEEE} 802.15.4e {TSCH} wireless
  sensor/actuator networks,'' in \emph{{IEEE} Symposium on Computers and
  Communications, {ISCC} 2014, Funchal, Madeira, Portugal, June 23-26,
  2014}.\hskip 1em plus 0.5em minus 0.4em\relax {IEEE} Computer Society, 2014,
  pp. 1--6. [Online]. Available:
  \url{https://doi.org/10.1109/ISCC.2014.6912607}
\BIBentrySTDinterwordspacing

\bibitem{2020-infocom-ost}
\BIBentryALTinterwordspacing
S.~Jeong, H.~Kim, J.~Paek, and S.~Bahk, ``{OST:} on-demand {TSCH} scheduling
  with traffic-awareness,'' in \emph{39th {IEEE} Conference on Computer
  Communications, {INFOCOM} 2020, Toronto, ON, Canada, July 6-9, 2020}.\hskip
  1em plus 0.5em minus 0.4em\relax {IEEE}, 2020, pp. 69--78. [Online].
  Available: \url{https://doi.org/10.1109/INFOCOM41043.2020.9155496}
\BIBentrySTDinterwordspacing

\bibitem{2020-algo-wsnsurvey}
\BIBentryALTinterwordspacing
C.~T. Nakas, D.~Kandris, and G.~Visvardis, ``Energy efficient routing in
  wireless sensor networks: {A} comprehensive survey,'' \emph{Algorithms},
  vol.~13, no.~3, p.~72, 2020. [Online]. Available:
  \url{https://doi.org/10.3390/a13030072}
\BIBentrySTDinterwordspacing

\bibitem{2021-csur-sync}
\BIBentryALTinterwordspacing
M.~Zimmerling, L.~Mottola, and S.~Santini, ``Synchronous transmissions in
  low-power wireless: {A} survey of communication protocols and network
  services,'' \emph{{ACM} Comput. Surv.}, vol.~53, no.~6, pp. 121:1--121:39,
  2021. [Online]. Available: \url{https://doi.org/10.1145/3410159}
\BIBentrySTDinterwordspacing

\bibitem{2009-icnp-esc}
\BIBentryALTinterwordspacing
Y.~Gu, T.~Zhu, and T.~He, ``{ESC:} energy synchronized communication in
  sustainable sensor networks,'' in \emph{Proceedings of the 17th annual {IEEE}
  International Conference on Network Protocols, 2009. {ICNP} 2009, Princeton,
  NJ, USA, 13-16 October 2009}, H.~Schulzrinne, K.~K. Ramakrishnan, T.~G.
  Griffin, and S.~V. Krishnamurthy, Eds.\hskip 1em plus 0.5em minus 0.4em\relax
  {IEEE} Computer Society, 2009, pp. 52--62. [Online]. Available:
  \url{https://doi.org/10.1109/ICNP.2009.5339699}
\BIBentrySTDinterwordspacing

\bibitem{2011-mobihoc-esa}
\BIBentryALTinterwordspacing
L.~Huang and M.~J. Neely, ``Utility optimal scheduling in energy-harvesting
  networks,'' \emph{{IEEE/ACM} Trans. Netw.}, vol.~21, no.~4, pp. 1117--1130,
  2013. [Online]. Available: \url{https://doi.org/10.1109/TNET.2012.2230336}
\BIBentrySTDinterwordspacing

\bibitem{2022-tmc-das}
\BIBentryALTinterwordspacing
T.~Zhu, J.~Li, H.~Gao, and Y.~Li, ``Data aggregation scheduling in battery-free
  wireless sensor networks,'' \emph{{IEEE} Trans. Mob. Comput.}, vol.~21,
  no.~6, pp. 1972--1984, 2022. [Online]. Available:
  \url{https://doi.org/10.1109/TMC.2020.3035671}
\BIBentrySTDinterwordspacing

\bibitem{2016-infocom-ehdataprocess}
\BIBentryALTinterwordspacing
S.~Yang, Y.~Tahir, P.~Chen, A.~Marshall, and J.~A. McCann, ``Distributed
  optimization in energy harvesting sensor networks with dynamic in-network
  data processing,'' in \emph{35th Annual {IEEE} International Conference on
  Computer Communications, {INFOCOM} 2016, San Francisco, CA, USA, April 10-14,
  2016}.\hskip 1em plus 0.5em minus 0.4em\relax {IEEE}, 2016, pp. 1--9.
  [Online]. Available: \url{https://doi.org/10.1109/INFOCOM.2016.7524475}
\BIBentrySTDinterwordspacing

\bibitem{2023-mtp-smartFa}
A.~D. Dhruva, B.~Prasad, S.~Kamepalli, S.~Kunisetti \emph{et~al.}, ``An
  efficient mechanism using iot and wireless communication for smart farming,''
  \emph{Materials Today: Proceedings}, vol.~80, pp. 3691--3696, 2023.

\bibitem{2008-cc-rfid}
\BIBentryALTinterwordspacing
B.~Sun, Y.~Xiao, C.~Li, H.~Chen, and T.~A. Yang, ``Security co-existence of
  wireless sensor networks and {RFID} for pervasive computing,'' \emph{Comput.
  Commun.}, vol.~31, no.~18, pp. 4294--4303, 2008. [Online]. Available:
  \url{https://doi.org/10.1016/j.comcom.2008.05.035}
\BIBentrySTDinterwordspacing

\bibitem{2020-sensors-smartAg}
\BIBentryALTinterwordspacing
K.~Haseeb, I.~U. Din, A.~Almogren, and N.~Islam, ``An energy efficient and
  secure iot-based {WSN} framework: An application to smart agriculture,''
  \emph{Sensors}, vol.~20, no.~7, p. 2081, 2020. [Online]. Available:
  \url{https://doi.org/10.3390/s20072081}
\BIBentrySTDinterwordspacing

\bibitem{2021-cc-edfg}
\BIBentryALTinterwordspacing
D.~K. Sah, K.~Cengiz, P.~K. Donta, V.~N. Inukollu, and T.~Amgoth, ``{EDGF:}
  empirical dataset generation framework for wireless sensor networks,''
  \emph{Comput. Commun.}, vol. 180, pp. 48--56, 2021. [Online]. Available:
  \url{https://doi.org/10.1016/j.comcom.2021.08.017}
\BIBentrySTDinterwordspacing

\bibitem{2018-tosn-ehwsn}
\BIBentryALTinterwordspacing
K.~S. Adu{-}Manu, N.~H. Adam, C.~Tapparello, H.~Ayatollahi, and W.~B.
  Heinzelman, ``Energy-harvesting wireless sensor networks (eh-wsns): {A}
  review,'' \emph{{ACM} Trans. Sens. Networks}, vol.~14, no.~2, pp.
  10:1--10:50, 2018. [Online]. Available: \url{https://doi.org/10.1145/3183338}
\BIBentrySTDinterwordspacing

\bibitem{2015-commag-ehwsn}
\BIBentryALTinterwordspacing
D.~Wu, J.~He, H.~Wang, C.~Wang, and R.~Wang, ``A hierarchical packet forwarding
  mechanism for energy harvesting wireless sensor networks,'' \emph{{IEEE}
  Commun. Mag.}, vol.~53, no.~8, pp. 92--98, 2015. [Online]. Available:
  \url{https://doi.org/10.1109/MCOM.2015.7180514}
\BIBentrySTDinterwordspacing

\bibitem{2019-ipsn-ehs}
\BIBentryALTinterwordspacing
K.~Geissdoerfer, R.~Jurdak, B.~Kusy, and M.~Zimmerling, ``Getting more out of
  energy-harvesting systems: energy management under time-varying utility with
  preact,'' in \emph{Proceedings of the 18th International Conference on
  Information Processing in Sensor Networks, {IPSN} 2019, Montreal, QC, Canada,
  April 16-18, 2019}, M.~R. Eskicioglu, L.~Mottola, and B.~Priyantha,
  Eds.\hskip 1em plus 0.5em minus 0.4em\relax {ACM}, 2019, pp. 109--120.
  [Online]. Available: \url{https://doi.org/10.1145/3302506.3310393}
\BIBentrySTDinterwordspacing

\end{thebibliography}

\begin{IEEEbiography}[{\includegraphics[width=1in,height=1.25in,clip,keepaspectratio]{./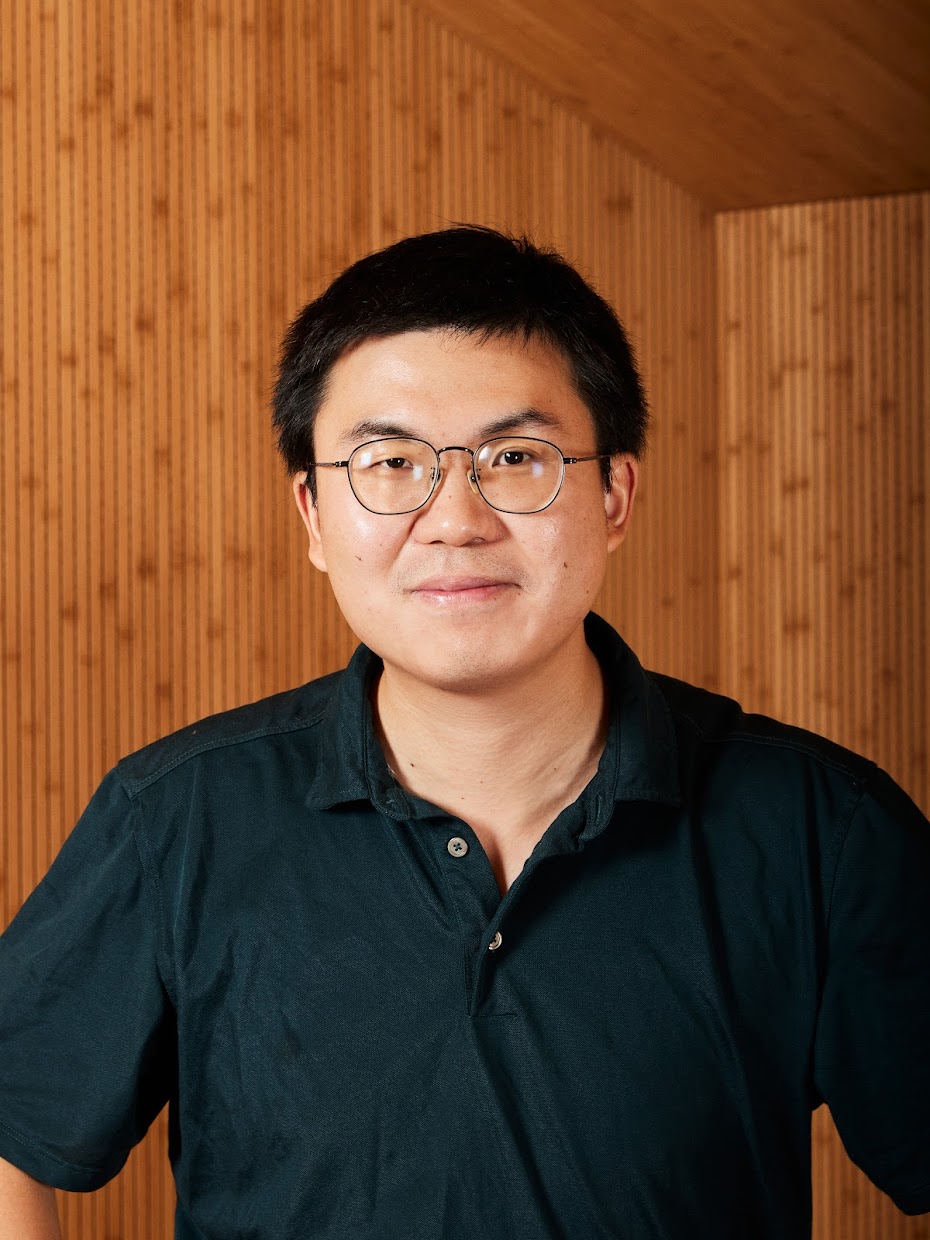}}]{Gaosheng Liu}
is currently a PhD candidate at Vrije Universiteit Amsterdam. He received his M.S. degree in the College of intelligence and computing, Tianjin University, Tianjin, China, in 2019. He received his bachelor degree from Tianjin university of technology and education, Tianjin, China, in 2015. His research interests include
intermittent communication, cyber-physical systems, and AIoT.
\end{IEEEbiography}

\begin{IEEEbiography}[{\includegraphics[width=1in,height=1.25in,clip,keepaspectratio]{./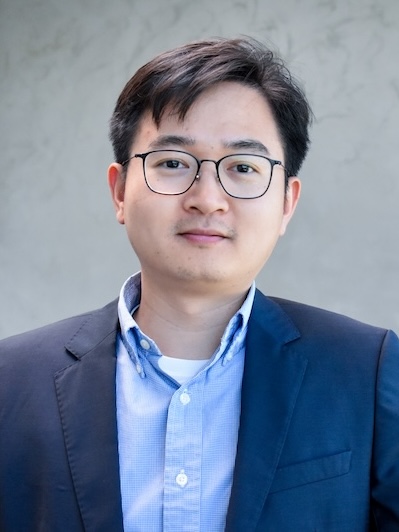}}]{Lin Wang}
is currently a Full Professor and Head of the Computer Networks group in the Department of Computer Science at Paderborn University, Germany. He received his Ph.D. in Computer Science from the Institute of Computing Technology, Chinese Academy of Sciences in 2015. Before joining Paderborn University, he was a tenured Assistant Professor at Vrije Universiteit Amsterdam and held positions at TU Darmstadt, SnT Luxembourg, and IMDEA Networks Institute. His research is focused on networked systems, with the goal of achieving efficiency, scalability, and sustainability. He has served as a referee for several funding agencies (e.g., DFG, ISF, and HK-RGC), on the program committees of major conferences (e.g., SoCC, Middleware, ICCD, INFOCOM, MobiHoc, ICDCS, IWQoS, and SEC) and as a reviewer for top journals (e.g., ToN, JSAC, TMC, and TPDS). He has received several important awards, including a Google Research Scholar Award, an Outstanding Paper Award of IEEE RTSS 2022, Best Paper Awards of IEEE IPCCC 2023, IEEE ISCC 2024, and IEEE HotPNS 2016, and an Athene Young Investigator Award of TU Darmstadt. He is currently a Senior Member of IEEE. 
\end{IEEEbiography}

\vfill

\end{document}